\theoremstyle{remark}
\newtheorem{theorem}{Theorem}
\newtheorem{remark}{Remark}
\newcommand{\zhinan}[1]{{\color{black}#1}}
\newcommand{\zhinann}[1]{{\color{black}#1}}
\newcommand{\rmT}{\rm{T}}
\newcommand{\rmD}{\rm{D}}
\newcommand{\rmd}{\rm{d}}
\newcommand{\rmp}{\rm{p}}
\newcommand{\rmH}{\rm{H}}
\newcommand{\tr}{\rm{tr}}
\newcommand{\SNR}{\rm{SNR}}
\newcommand{\bfA}{\mathbf{A}}
\newcommand{\bfb}{\mathbf{b}}
\newcommand{\bfa}{\mathbf{a}}
\newcommand{\Bdiag}{{\rm{Bdiag}}}
\newcommand{\rmvec}{{\rm{vec}}}
\newcommand{\bfI}{\mathbf{I}}
\newcommand{\bfe}{\mathbf{e}}
\newcommand{\bfq}{\mathbf{q}}
\newcommand{\bfh}{\mathbf{h}}
\newcommand{\bfd}{\mathbf{d}}
\newcommand{\bfD}{\mathbf{D}}
\newcommand{\bfy}{\mathbf{y}}
\newcommand{\bfv}{\mathbf{v}}
\newcommand{\bfu}{\mathbf{u}}
\newcommand{\bfU}{\mathbf{U}}
\newcommand{\bfF}{\mathbf{F}}
\newcommand{\bfn}{\mathbf{n}}
\newcommand{\bfr}{\mathbf{r}}
\newcommand{\bfz}{\mathbf{z}}
\newcommand{\bff}{\mathbf{f}}
\newcommand{\bfg}{\mathbf{g}}
\newcommand{\bfG}{\mathbf{G}}
\newcommand{\bfR}{\mathbf{R}}
\newcommand{\bfx}{\mathbf{x}}
\newcommand{\bfw}{\mathbf{w}}
\newcommand{\bfW}{\mathbf{W}}
\newcommand{\mcW}{\mathcal{W}}
\newcommand{\mcT}{\mathcal{T}}
\newcommand{\mcP}{\mathcal{P}}
\newcommand{\mcC}{\mathcal{C}}
\newcommand{\bfLambda}{\mathbf{\Sigma}}
\newcommand{\bflambda}{\boldsymbol{\lambda}}
\newcommand{\bfphi}{\boldsymbol{\phi}}
\newcommand{\bfgamma}{\boldsymbol{\gamma}}
\newcommand{\bfGamma}{\boldsymbol{\eta}}
\newcommand{\Tc}{T_{\rm c}}
\newcommand{\Ts}{T_{\rm s}}
\newcommand{\be}{\begin{equation}}
\newcommand{\ee}{\end{equation}}
\newcommand{\ba}{\begin{array}}
\newcommand{\ea}{\end{array}}
\newcommand{\bdm}{\begin{displaymath}}
\newcommand{\edm}{\end{displaymath}}
\newcommand{\bea}{\begin{eqnarray}}
\newcommand{\eea}{\end{eqnarray}}
\newcommand{\bean}{\begin{eqnarray*}}
\newcommand{\eean}{\end{eqnarray*}}
\def\[{\left[}
\def\]{\right]}
\def\diag{{\rm{diag}}}
\begin{document}
\title{A Time-Variant Channel Prediction and Feedback Framework for Interference Alignment }

\author{
\IEEEauthorblockN{Zhinan Xu, Markus Hofer, Thomas Zemen}

\thanks{Copyright (c) 2015 IEEE. Personal use of this material is permitted. However, permission to use this material for any other purposes must be obtained from the IEEE by sending a request to pubs-permissions@ieee.org.}
\thanks{The authors are with AIT Austrian Institute of Technology, Vienna, Austria (e-mail: \{zhinan.xu, markus.hofer, thomas.zemen\}@ait.ac.at).}%
\thanks{Manuscript received XXX, XX, 2015; revised XXX, XX, 2015.}
}

\markboth{IEEE Transactions on Vehicular Technology, accepted for publication.}{}
\maketitle

\begin{abstract}
In interference channels, channel state information (CSI) can be exploited to reduce the interference signal dimensions and thus achieve the optimal capacity scaling, i.e.~degrees of freedom, promised by the interference alignment technique. However, imperfect CSI, due to  channel estimation error, imperfect CSI feedback and time selectivity of the channel, lead to a performance loss. In this work, we propose a novel limited feedback algorithm for single-input single-output interference alignment in time-variant channels. \zhinan{The feedback algorithm encodes the channel evolution in a small number of subspace coefficients, which allow for reduced-rank channel prediction to compensate for the channel estimation error due to time selectivity of the fading process and feedback delay.} An upper bound for the rate loss caused by feedback quantization and channel prediction is derived. Based on this bound, we develop a dimension switching algorithm for the reduced-rank predictor to find the best tradeoff between quantization- and prediction-error. Besides, we characterize the scaling of the required number of feedback bits in order to decouple the rate loss due to channel quantization from the transmit power. Simulation results show that a rate gain over the traditional non-predictive feedback strategy can be secured and a $60\%$ higher rate is achieved at $20$dB signal-to-noise ratio with moderate mobility.
\end{abstract}

\zhinan{
\begin{IEEEkeywords}
Interference alignment, IA, channel state information, time-variant fading channels, time-variant channel prediction, basis expansion model, limited feedback.
\end{IEEEkeywords}
}

\section{Introduction}
Interference is a crucial limitation in next generation cellular systems. To address this problem, interference alignment (IA) has attracted much attention and has been extensively studied lately. IA is able to achieve the optimal capacity scaling, i.e.~degrees of freedom (DoF), at a high signal-to-noise ratio (SNR) and a rate of $K/2\cdot {\rm{log}}({\rm{SNR}})+o({\rm{log}}({\rm{SNR}}))$ for the $K$ user interference channel with time-variant coefficients \cite{Cadambe2008a}. However, this result is based on the assumption that \emph{global} channel state information (CSI) is perfectly known at all nodes. This is extremely hard to achieve due to the large amount of required feedback information. 

CSI imperfection degrades IA performance. Assuming the imperfect channel matrix is the summation of the true channel matrix and an independent error matrix, the impact of imperfect CSI on IA has been investigated in \cite{Tresch2009,Xie2013,Razavi2014,Aquilina15}. The work of \cite{Tresch2009} derives both upper and lower bounds on the achievable rates assuming noisy CSI. The error performance of IA is studied in \cite{Xie2013} and adaptive schemes are proposed to introduce robustness against CSI imperfection.  The performance loss of IA under CSI mismatch for interference channels is studied in \cite{Razavi2014} showing that full DoF are achievable if the variance of the CSI measurement error is proportional to the inverse of the SNR. Similar results are found in \cite{Aquilina15} for interference broadcast channels.

Limited feedback via quantization is a promising approach to transfer CSI to the transmitter side in frequency division duplex (FDD) systems.
Several approaches address the problem of limited feedback for IA \cite{Bolcskei2009,Krishnamachari2013,Kim2012,Ayach2012,Xu2015aa} assuming perfect channel estimation. In \cite{Bolcskei2009}, channel coefficients are quantized using a Grassmannian codebook for frequency-selective single-input single-output (SISO) channels. The work in \cite{Krishnamachari2013} extends the results to multiple-input multiple-output (MIMO) channels. Both \cite{Bolcskei2009} and \cite{Krishnamachari2013} show that the full DoF are achievable as long as the feedback rate is high enough (which scales with the transmit power). This result aligns with the one that is found for MIMO broadcast channels in \cite{Jindal2006}. The work in \cite{Kim2012} addresses the problem of improving the sum rate under limited feedback by involving additional iterative computation of pre-quantization filters at the receivers. To further reduce the feedback overhead, \cite{Ayach2012} considers differential limited feedback on the Grassmannian manifold by exploiting temporal correlation of the time-selective fading channels. In the context of opportunistic transmission for IA, \cite{Xu2015aa} shows that multi-user diversity can be exploited based on 1-bit feedback from each user, while preserving the full DoF. Instead of quantizing the CSI, \cite{Ayach2012a} considers analog feedback and shows that the DoF of IA can be preserved as long as the forward and reverse link SNRs scale together.

However, for a practical system, the imperfection of CSI is caused by various aspects: 
\begin{enumerate}[(a)]
	\item For time-variant channels, CSI is acquired with the aid of pilot symbols. The channel varies over time due to the mobility of the users. If the channel changes after the transmission of the pilot symbols, the receiver cannot detect the channel variation, which leads to a reduction in sum rate due to the use of outdated channel estimates.
	\item For FDD, CSI is fed back through limited capacity feedback channels. The error due to quantized feedback degrades the IA performance. 
	\item The feedback information arrives at the transmitter with a delay which causes a further performance degradation.
\end{enumerate}

A related body of research tackling the above mentioned problems exists for single-cell multiuser MIMO systems \cite{Caire2010,Santipach2010}. For interference channels, \cite{ElAyach2012} studies (a) and (b) 
for MIMO IA using a minimum mean square error (MMSE) estimator. The studies in \cite{Caire2010,Santipach2010,ElAyach2012} consider block fading channels. The work of \cite{Mungara2014} extends \cite{ElAyach2012} considering time-selective continuous fading in the payload part, while assuming constant fading for the training part. 

In this paper, we jointly consider (a)-(c) for interference alignment in wideband SISO systems with symbol extension over frequency \cite{Cadambe2008a}, i.e.~precoding across orthogonal frequency dimensions. \zhinan{
We address channel estimation, feedback and prediction jointly and analyze the impact of imperfect CSI in terms of sum rate loss. Previous works solely tackle the theoretical aspect of imperfect CSI \cite{Tresch2009,Xie2013,Razavi2014,Aquilina15}, or limited feedback assuming perfect channel estimation \cite{Bolcskei2009,Krishnamachari2013,Kim2012,Ayach2012,Xu2015aa}. Besides, the prediction aspect for IA is studied in time division duplex (TDD) systems \cite{Yu2012} or assuming perfect feedback for FDD systems  \cite{Zhao2014b}. 
However, it is almost impossible to feed back all channel impulse responses of the payload for FDD systems due to limited capacity feedback channels. Our approach takes advantage of the band-limited nature of time-variant channels, which allows us to describe the channel variation of the payload using a few subspace coefficients. The feedback of subspace coefficients using vector quantization for channel prediction is not considered in any of the previous works to the best of our knowledge. We exploit this concept in the context of IA.} Although we present our results for SISO interference channels, the complete strategy can be generalized to MIMO interference channels using a method similar to \cite{Krishnamachari2013} by vectorization of channel matrices. The scientific contributions of the paper: 
\begin{itemize}
\item We tackle the problems (a) and (c) by reduced-rank channel prediction using discrete prolate spheroidal (DPS) sequences \cite{Zemen2007}. Thanks to the energy concentration of the sequences in the Doppler domain, we are able to describe the channel evolution by only a few subspace coefficients. 
\item To address problem (b), we show that the subspace coefficients can be quantized and fed back using vector quantization, which greatly reduces the redundancy of the codebook by exploiting the rotation invariance. In addition, we highlight the importance to feed back the subspace coefficients in delay domain, resulting in a reduction of noise. With the subspace coefficients, the transmitter is able to perform channel prediction to combat the time selectivity of the channel. We generalize the results from \cite{Xu2014a} allowing for more channel extensions than channel delay taps.
\item The subspace vector to be quantized has correlated entries in some cases. We characterize the second order statistics of the subspace vector, which is used for whitening the vector to match the statistics of the quantization codebook, improving the results in \cite{Xu2014a}.
\item An upper bound of the rate loss due to the channel prediction- and quantization-error is derived, which is used to facilitate an adaptive subspace dimension switching algorithm.
\item We show that there exists a tradeoff between quantization error and prediction error at a given feedback rate. The subspace dimension switching algorithm is efficient to capture the tradeoff and find the subspace dimension associated with a higher rate.
\item We characterize the scaling of the required number of feedback bits to decouple the rate loss due to quantization from the transmit power.
\end{itemize}

The following notation is used throughout this paper: We denote a scalar by $a$, a column vector by $\bfa$ and a matrix by $\bfA$. The superscript $^*$, $^{\rmT}$ and $^{\rmH}$ stand for conjugate, transpose and Hermitian transpose, respectively. The notation $\left\| \cdot \right\|$, $\left\| \cdot \right\|_{\rm F}$, $\tr(\cdot)$, $\rmvec(\cdot)$, $\det(\cdot)$, $\lceil{\cdot}\rceil$ and $\mathbb{E}[\cdot]$ denotes vector 2-norm, Frobenius norm, trace, vectorization, determinant, ceiling operation and the expectation operation, respectively. We denote by $[\bfA]_{k,\ell}$ the $(k,\ell)$-th element of matrix $\bfA$ and by $[\bfa]_{k}$ the $k$-th element of vector $\bfa$.

The rest of the paper is organized as follows: In Section \ref{sec2}, we introduce the system model of IA. Section \ref{RR} presents the reduced-rank channel estimation and prediction algorithm. Section \ref{Trfb} describes the proposed training and feedback algorithm. The performance is quantified in terms of sum rate loss in Section \ref{Ratelossa}, where a subspace switching algorithm is given. The numerical results are provided in Section \ref{numersim}.
We conclude the paper in Section \ref{Conclusion}.

\section{System model} \label{sec2}
Let us consider a $K$ user time- and frequency-selective SISO interference channel, which consists of $K$ transmitter and receiver pairs. We denote by $h_{k,\ell}(t,\tau)$ the time-variant impulse response between transmitter $\ell$ and receiver $k$, where $t$ is time and $\tau$ is delay. Orthogonal frequency division multiplexing (OFDM) is used to convert the time- and frequency- selective channel into $N$ parallel time-selective and frequency-flat channels. The sampled impulse response is defined as $h_{k,\ell}[m,s]=h_{k,\ell}(m\Ts,s\Tc)$, where $1/\Tc$ is the bandwidth and $\Ts=(N+G)\Tc$ denotes the OFDM symbol duration with a cyclic prefix length $G$.
The $S$-tap time-variant sampled impulse response between transmitter $\ell$ and receiver $k$ is denoted by ${\bfh}_{k,\ell}\[m\]=\[ h_{k,\ell}[m,1],\ldots, h_{k,\ell}[m,S] \]^{\rmT}$, $\forall k,\ell \in \{ 1,\ldots,K \}$. Every element of the channel impulse response vector ${\bfh}_{k,\ell}\[m\]$ is independent identically distributed (i.i.d.) with a power delay profile (PDP) $\mathbb{E}\{\bfh_{k,\ell}[m]\bfh_{k,\ell}[m]^{\rmH}\}=\diag(\[p_{k,\ell}^{1},\ldots,p_{k,\ell}^{S}\])$. We assume the channel gain $\sum_{s=1}^{S} p_{k,\ell}^{s}=N$. 

The variation of a wireless channel for the duration of the transmission of a data packet is caused by user mobility and multipath propagation. We define the normalized Doppler frequency of the time-selective fading process $\{h_{k,\ell}[m,s]\}$ as 
\begin{align}
 \nu_{\rm{D}}=f_{\rmD} T_{\rm{s}}
\end{align}
where $f_{\rmD}$ denotes the Doppler frequency in Hertz (Hz). The temporal covariance function over consecutive OFDM symbols becomes
\begin{align}
R_{\bfh_{k,\ell}}[m]=\mathbb{E}\{\bfh_{k,\ell}[a]^{\rmH}\bfh_{k,\ell}[a+m]\}.
\end{align}
The temporal covariance matrix is defined as $[\bfR_{\bfh_{k,\ell}}]_{a,m}=R_{\bfh_{k,\ell}}[a-m]$ for $a,m \in [0,\ldots,M-1]$.

Using OFDM, The observed frequency selective channel can be converted into $N$ narrowband frequency-flat channels as 
\begin{align}
{\bfw}_{k,\ell}[m]= \bfD_{N \times S} {\bfh}_{k,\ell}[m] \label{channelconvert}
\end{align}
where $\bfD_{N \times S}$ is the $N \times S$ submatrix of the $N\times N$ DFT matrix $\bfD_{N}$. The DFT matrix $\bfD_{N}$ is defined as
$[\bfD_{N}]_{i,j}=\frac{1}{\sqrt N}e^{-j2\pi (i-1)(j-1)/N}$, $\forall i,j \in \{ 1,\ldots,N \}$. The diagonal matrix containing the channel frequency response can be written as $\bfW_{k,\ell}[m] =\diag \left( \bfw_{k,\ell}[m] \right)$.

We consider two different communication phases: (i) the CSI acquisition via pilots and (ii) the transmission of payload. In the CSI acquisition phase, the pilot symbols from different transmitters are orthogonalized in time. During the transmission of payload, all transmitters will send simultaneously. However, for a given transmitter, its signal is only intended to be received by a single user for a given signaling interval. The signal received is the superposition of the signals transmitted by all transmitters. The received signal at receiver $k$ in these two phases can thus be modeled by
\begin{align}
\bfy_{k}[m]=
\begin{cases}
\bfW_{k,k}[m] \bfx_{k}[m]+\bfn_{k}[m]\,, & m \in \mcP_k\\
\bfW_{k,k}[m] \bfx_{k}[m]+\nonumber \\
\quad \quad \sum_{k\neq \ell} {\bfW_{k,\ell}\[m\]\bfx_{\ell}[m]}+ \bfn_{k}[m] \,, & \text{elsewhere}
\end{cases}
\end{align}
where $\mathcal P_k$ denotes the pilot position indices of user $k$. The vector $\bfx_{k}[m] \in \mathbb{C}^{N \times 1}$ denotes the transmitted symbol for user $k$ with power constraint $\mathbb{E}{\{\bfx_{k}[m]^{\rmH}\bfx_{k}[m]\}}=PN$, where $P$ is the transmit power per subcarrier. Additive complex symmetric Gaussian noise at receiver $k$ is denoted by $\bfn_{k}[m] \sim  \mathcal{C} \mathcal{N}(0, \mathbf{I}_{N})$. The SNR is defined as ${\SNR}=P$.

In this work we consider a user velocity and carrier frequency such that the Doppler bandwidth of the fading process $f_D$ is much smaller than the subcarrier spacing $ f_{\rm sc}=\frac{1}{\Tc N}$. \zhinan{Hence, the inter-carrier interference resulting from Doppler shift is small enough to be neglected for the processing at the receiver side, see the discussion in \cite[Sec.~II]{edfors1996} and \cite[Sec.~II]{Zemen2012}.

Throughout the paper, we adopt a widely used assumption, where all channels ${\bfh}_{k,\ell}~ \forall k, \ell$ have the same gain and time-selective fading statistics. Transmit power $P$ is assumed for all transmitters. 
}
 
\subsection{SISO Interference Alignment with Perfect CSI}
IA can achieve optimal DoF when infinite channel extensions exist\cite{Cadambe2008a}. Using IA over $N$ orthogonal subcarriers, each transmitter $k$ sends a linear combination of $d_k<N$ symbols $s_k^{i}[m]$, along the linear precoding vectors $\bfv^i_k \in \mathbb{C}^{N \times 1}$, yielding 
\be
\bfx_{k}[m]=\sum_{i=1}^{d_k}{\bfv_k^{i}[m]s_k^{i}[m]}\,,
\ee
where $s_k^{i}[m] \in \mathbb{C}$ denotes the transmitted symbol and $\mathbb{E}\{\left|s_k^{i}[m]\right|^2\}= PN/d_k$. The precoding vector $\bfv_k^{i}[m]$ fulfills $\left\|\bfv_k^{i}[m] \right\|^2=1$. 
Defining the decoding vector $\bfu_k^{i}[m] \in \mathbb{C}^{N \times 1}$ subject to $\left\| \bfu_k^{i}[m]\right\|^2=1$, the received signal at receiver $k$ for symbol $i$ can be expressed as 
\begin{align}
&\bfu_k^{i}[m]^{\rmH} \bfy_{k}[m] = \underbrace{\bfu_k^{i}[m]^{\rmH}  \bfW_{k,k}[m] {\bfv_k^{i}[m]s_k^{i}[m]}}_{\text{desired signal}} +\nonumber\\
&\underbrace{\bfu_k^{i}[m]^{\rmH} \sum_{j \neq i} \bfW_{k,k}[m] {\bfv_k^{j}[m]s_k^{j}[m]}}_{\text{inter-stream interference} }+\nonumber\\
& \underbrace{\bfu_k^{i}[m]^{\rmH}\sum_{\ell \neq k}\sum_{j=1}^{d_\ell} {\bfW_{k,\ell}\[m\]{\bfv_\ell^{j}[m]s_\ell^{j}[m]}}}_{\text{inter-user interference} }+\bfu_k^{i}[m]^{\rmH} \bfn_{k}[m]
\end{align}
for $i \in \{1,\ldots,d_k\}$ and $k \in \{1,\ldots,K\}$. Considering i.i.d Gaussian input of $s_k^{i}[m]$, the achievable sum rate is given by
\begin{align}
& R_{\rm sum}[m] \nonumber\\
&=\sum_{k,i} \frac{1}{N} \log_{2}\left( 1+ \frac{\displaystyle \frac{NP}{d_k}\left| \bfu_k^{i}[m]^{\rmH} \bfW_{k,k}[m] \bfv_k^{i}[m]\right|^2  }{\mathcal{I}_{k,i}^{1}[m]+\mathcal{I}_{k,i}^{2}[m]+ 1}\right).
\end{align}
where
\begin{align}
{\mathcal{I}}_{k,i}^{1}[m]&=\sum_{j\neq i} \frac{NP}{d_k} \left|   {\bfu}_k^{i}[m]^{\rmH} \bfW_{k,k}[m] {\bfv}_k^{j}[m] \right|^2, {\rm{~and}}\label{Inter1}\\ 
{\mathcal{I}}_{k,i}^2 [m]&=\sum_{\ell \neq k}\sum_{j=1}^{d_\ell} \frac{NP}{d_\ell} \left|   {\bfu}_k^{i}[m]^{\rmH} \bfW_{k,\ell}[m] {\bfv}_\ell^{j}[m]\right|^2, \label{Inter2}
\end{align}
denote inter-stream interference and inter-user interference, respectively.

 The precoding and decoding vectors can be designed according to \cite{Cadambe2008a}. Each transmitter computes the precoding vectors $\bfv_k^{i}[m]$ such that the interference signals from the undesired $K-1$ transmitters are aligned at all receivers leaving the interference free subspace for the intended signal.  With perfect CSI, the following IA conditions should be satisfied
\begin{align}
\label{cond1}
\bfu_k^{i}[m]^{\rmH} \bfW_{k,k}[m] \bfv_k^{j}[m]&=0, \quad \forall k,~\forall i\neq j \\ 
\label{cond2}
\bfu_k^{i}[m]^{\rmH} \bfW_{k,\ell}[m] \bfv_\ell^{j}[m]&=0, \quad \forall k\neq \ell,~\forall i,~j \\ 
\label{cond3}
\left|\bfu_k^{i}[m]^{\rmH} \bfW_{k,k}[m] \bfv_k^{i}[m]\right| &\geq c>0,  \quad \forall k,~i
\end{align}
where $c$ is a constant. Accordingly, the interference terms can be perfectly canceled satisfying $\mathcal{I}_{k,i}^{1}[m]=\mathcal{I}_{k,i}^{2}[m]=0$.

\section{Reduced-Rank Channel Estimation and Prediction} \label{RR}
In this section, we introduce the idea of channel prediction. First, a well-known minimum mean square error (MMSE) solution is given. In subsection \ref{RRCP}, we present the reduced-rank predictor and its relation to the MMSE solution. To simplify notations, we drop the indices of transmitters and receivers and focus on the prediction problem for a specific subcarrier. Let us denote by $w[m,n]$, $n[m,n]$ $y[m,n]$ and $x[m,n]$ the $n$-th element of the vector $\bfw[m]$ $\bfn[m]$ $\bfy[m]$ and $\bfx[m]$, respectively. 
The channel samples of the $n$-th subcarrier over time can be written as 
\begin{align}
\bfg^{n}=\left[w[0,n],\ldots,w[M-1,n]\right]^{\rmT}, \label{eqtvch}
\end{align}
where $M$ is the length of a single data block. 

A number of $N_{\rm P}$ pilot symbols $x[m,n]\in \{\sqrt{P},-\sqrt{P}\}, \forall m \in \mcP$ known at the receivers allow us to acquire channel knowledge. With the pilot symbols, we obtain the noisy channel observations at $m \in \mathcal P$ according to
$w'[m,n]=\frac{1}{P} y[m,n]x[m,n]=w[m,n]+\frac{1}{\sqrt{P}}n'[m,n]$,
where
$n'[m,n]=\frac{1}{\sqrt{P}}{n[m,n]x[m,n]}$ has the same statistical properties as 
$n[m,n]$.
The noisy observation vector of the $n$-th subcarrier over time 
\begin{align}
\bfg'^{n}=\bigg[w' \left[0,n \right],\ldots,w'\left[M-1,n\right]\bigg]^{\rmT}
\end{align}
is used for channel prediction. Defining the  $M \times 1$ vector $\bfr_{\bfh}[m]  = [R_{\bfh}[m],  R_{\bfh}[m-1],\ldots, R_{\bfh}[m-M+1] ]$, the estimator minimizing the MSE can be derived as \cite{Kay1993}
\begin{align}
\tilde{w}_{\rm MMSE} [m,n]= \bfr_{\bfh}^{(\mcP)}[m]^{\rmH} (\bfR_{\bfh}^{(\mcP)}+\frac{1}{P}\bfI_{N_{\rm P}})^{-1}\bfg'^{{n}{(\mcP)}}
\end{align}
where 
the covariance matrix $\bfR_{\bfh}^{(\mcP)} \in \mathbb{C}^{N_{\rm P} \times N_{\rm P}}$ of the channel at pilot positions is obtained as a sub-matrix of $\bfR_{\bfh} \in $ by extracting $K$-spaced rows and/or columns, i.e.~$[\bfR_{\bfh_{k,\ell}}^{(\mcP)}]_{i,m}=[\bfR_{\bfh_{k,\ell}}]_{K(i-1)+i,K(m-1)+m}$.
The vectors $\bfg'^{n (\mcP)}$ contains the respective elements for $m \in {\mcP}$ in the same order as in (\ref{eqtvch}). The $N_{\rm P} \times 1$ vector $\bfr_{\bfh}^{(\mcP)}[m]$ contains the respective elements of $R_{\bfh}[m-m_{\rm P}]$ for $m_{\rm P} \in \mcP$ in the same order as $\bfr_{\bfh}[m]$.

\subsection{Reduced-Rank Channel Predictor} \label{RRCP}
The channel $\bfg^{n}$ can be approximated by a reduced rank representation \cite{Zemen2007,Dietrich2005}, which expands $\bfg^{n}$ by $D$ orthonormal basis functions $\bfu_{p}=[u_{p}[0],\ldots,u_{p}[M-1]]^{\rmT}$, $p \in \{0,\ldots,D-1\}$
\be
\bfg^{n} \approx \bfU \bfphi^{n} =\sum_{p=0}^{D-1} \phi_p^n \bfu_p\,,
\ee
where $\bfU=[\bfu_0,\ldots,\bfu_{D-1}]$ collects $D$ basis vectors of a temporal covariance matrix $\bfR_{\bfh}$ and $\bfphi^n = [\phi_0^n,\ldots,\phi_{D-1}^n]$ contains the subspace coefficients for the channel $\bfg^n$.

 Let us define $\bff[m]=[u_{0}[m],\ldots,u_{D-1}[m]]^{\rmT}$, which collects the values of the basis functions at time $m$. The estimate of ${\bfphi}^n$ can be calculated according to
\begin{align}
\tilde{\bfphi}^n&= \bfG^{-1} \sum_{m \in \mcP} w'[m,n] \bff[m]^{\ast}\,, \label{cal_phi}  \\
&=\bfG^{-1} {\bfU^{(\mcP)}}^{\rmH} \bfg'^{n (\mcP)} \label{obtianophi}
\end{align}
where $\bfG=\sum_{m \in \mcP} \bff[m]\bff[m]^{\rmH}={\bfU^{(\mcP)}}^{\rmH}{\bfU^{(\mcP)}}$ and $\bfU^{(\mcP)}=[\bfu_{0}^{(\mcP)},\ldots,\bfu^{(\mcP)}_{D-1}]$. The vector $\bfu^{(\mcP)}_p$ contains the respective elements for $m \in {\mcP}$ in the same order as in (\ref{eqtvch}). Thus, the estimated (predicted) $n$-th subchannel at time instant $m \in \mathbb{Z}$ is given by
\begin{align}
\tilde{w} [m,n]= \sum_{p=0}^{D-1}{\tilde{\phi}_p^n u_p[m]}=\bff[m]^{\rmT} \tilde{\bfphi}^n.  \label{pred}
\end{align}

\subsection{The Choice of Subspace Dimension - An Upper Bound}
In wireless communication systems, detailed second-order statistics are difficult to obtain due to the short time-interval over which the channel can be assumed to be stationary \cite{Bernad2013}. For this reason, we assume incomplete second-order statistics in this work, where only the support $\mcW=(-\nu_{\rm{D}},\nu_{\rm{D}})$ of the Doppler spectrum is known to the transmitters and receivers with $\nu_{\rm{D}} \ll 1/2$. For the case of unknown support, please refer to adaptive channel estimation using hypotheses test \cite{Zemen2012,Hofer15}. The shape of the Doppler spectrum is assumed to be flat with support $\mcW$, which is given by 
\begin{align}
S_{\rm flat}(\nu,\mcW)=
\begin{cases}
\frac{1}{| \mcW |},&  \nu \in \mcW\\
0,&  {\rm otherwise}.
\end{cases}
\end{align}
The covariance function of such a fading process becomes 
\begin{align}
R_{\rm flat}[m,\mcW]=\frac{\sin(2\pi m \nu_{\rm{D}})}{\pi m |\mcW|}.
\end{align}
The corresponding covariance matrix is $\left[\bfR_{\rm flat}[\mcW]\right]_{l,m}=R_{\rm flat}[l-m,\mcW]$ for $l,m \in [0,\ldots,M-1]$. 
The eigenvector of the covariance matrix $R_{\rm flat}[m,\mcW]$ are also known as DPS sequences \cite{Thompson1982,Slepian1978}, which are utilized as the basis functions $\bfu_{p}[\mcW]$ in this paper. 
The band-limiting region of the DPS sequences $\bfu_{p}[\mcW]$ is chosen according to the support $\mcW$ of the Doppler spectrum of the time-selective fading process. To ease the notation, we drop $\mcW$ in the rest of the paper. Given $\bfu_{p}$,  \cite [Sec.~III.D]{Zemen2007} shows that the DPS sequences can be extended over $\mathbb{Z}$ in the minimum-energy band-limited sense, enabling channel prediction in (\ref{pred}). 
The energy of the DPS sequences is most concentrated in the interval of block length $M$. This energy concentration is defined as
\be
\kappa_{p}=\frac{\displaystyle \sum_{m=0} ^{M-1} \left| u_{p}[m]\right|^2} {\displaystyle \sum_{m=-\infty} ^{\infty} \left| u_{p}[m]\right|^2}.
\ee
The values  $\kappa_{p}$ are clustered near 1 for $p\leq \left\lceil 2 \nu_{\rm{D}} M \right\rceil$ and decay rapidly for $p > \left\lceil 2 \nu_{\rm{D}} M \right\rceil$. The optimal subspace dimension that minimizes the mean square error (MSE) for a given noise level is found to be
\cite{Zemen2007}
\be
\mathcal{D}_{\rm{ub}}=  \underset{D \in \{1,\ldots,M \}}{\arg \min} \left( \frac{1}{\left| \mcW \right| M}  \sum_{p=D}^{M-1} \kappa_{p} + \frac{D}{MP} \right). \label{optD}
\ee
Later in Sec.~\ref{Ratelossa} we will see that $\mathcal{D}_{\rm{ub}}$ is the upper bound of the subspace dimension when quantized feedback is used.



\begin{remark} \label{predrelation}
The reduced-rank channel prediction is a close approximate of the MMSE predictor, especially at high SNR. At high SNR ($P\rightarrow\infty$), the MMSE predictor converges to a maximum-likelihood (ML) predictor i.e., $\tilde{w}_{\rm ML} [m,n]= \bfr_{\bfh}^{(\mcP)}[m]^{\rmH} \bfR_{\bfh}^{{-1}{(\mcP)}}\bfg'^{{n}{(\mcP)}}$. For the reduced-rank predictor, more basis functions tend to be taken as $P\rightarrow\infty$ according to (\ref{optD}). Therefore, it also converges to a ML predictor due to the relationship shown in \cite[Eq.38]{Zemen2007}.
\end{remark}

\section{Training and Feedback for IA} \label{Trfb}
In this section, we consider a limited feedback scheme for the subspace coefficients $\tilde{\bfphi}^n$ estimated at the receiver side. Figure~\ref{frames} shows the working principle of the feedback system. The subspace coefficients are estimated using the pilot symbols. Each receiver estimates the channels to all $K$ transmitters separately. To this end, the pilot symbols from different transmitters are orthogonalized in time. The number of pilot symbols  $M_{\rmp}$ for each transmitter satisfies $M=KM_{\rmp}$. The pilot placement for the $k$-th transmitter is defined as 
\begin{align}
{\mathcal P}_k = \big\{k+(i-1)K {\text{, where }  } i\in \{1,\ldots, M_{\rmp}\} \big\}.
\end{align}

Error-free dedicated broadcast channels with delay $T_{\rm D}$ are assumed from each receiver to all the other nodes, i.e.~ all the transmitters and all other receivers. During the feedback phase, each receiver broadcasts the estimated subspace coefficients using $N_{\rm d}$ bits. Upon reception of the quantized feedback, the transmitters and receivers can calculate the IA precoders and decoders, respectively.
\begin{figure}[t]
	\centering
	\includegraphics[width=0.8\columnwidth]	{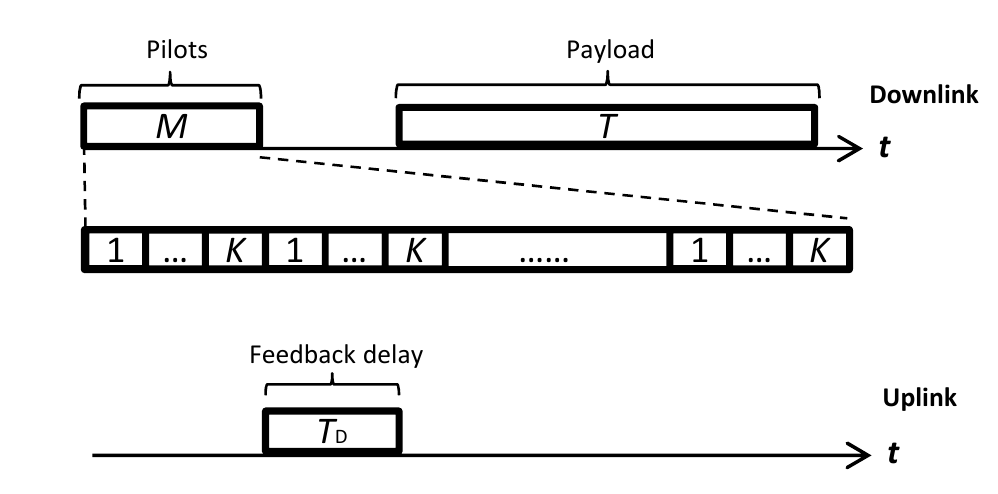}
	\caption{Signaling model, where $M$ denotes the length of the pilot sequence and $T$ the payload length.} 
	\label{frames}
\end{figure}

\subsection{Noise Reduction}\label{noiosereduction}
Assuming a wide-sense-stationary fading process, the $N$ narrowband channels  from the same transmitter receiver pair have the same Doppler bandwidth, thus all $N$ fading processes share the same set of basis functions. 
Due to the fact that $N \geq S$, the impulse response $\bfh[m]$ contains less coefficients than the frequency response $\bfw[m]$. Thus, $\bfh[m]$ is better suited for CSI feedback. 
\zhinan{
For a noisy channel, the equivalent subspace coefficients in the delay domain can be expressed as
\begin{align}
[\tilde{\bfgamma}^1 ,\ldots, \tilde{\bfgamma}^N]^{\rmT}= \bfD_{N}^{\rmH} [\tilde{\bfphi}^1 ,\ldots, \tilde{\bfphi}^N ]^{\rmT}  \label{linear}.
\end{align}
Among all $N$ taps of the estimated channel, only the first $S$ columns $\{\tilde{\bfgamma}^1 ,\ldots, \tilde{\bfgamma}^S\}$ are from the contribution of the channel, while the rest $N-S$ columns are due to noise impairment. Hence, the elimination of these channel taps can improve the SNR. This can be seen by rewriting (\ref{cal_phi}) as  
\begin{align}
\tilde{\bfphi}^n&={\bfphi}^n+\ddot{\bfphi}^n\\
&= \bfG^{-1} \sum_{m \in \mcP} w[m,n] \bff[m]^{\ast}+\bfG^{-1} \sum_{m \in \mcP} n'[m,n] \bff[m]^{\ast}
\end{align}
where the subspace coefficients $\tilde{\bfphi}^n$ are decomposed into two parts, corresponding to the channel and noise, respectively. The delay domain coefficients can also be decomposed into two parts as
\begin{align}
[\tilde{\bfgamma}^1 ,\ldots, \tilde{\bfgamma}^N]=[{\bfgamma}^1 ,\ldots, {\bfgamma}^S,\mathbf{0}_{D \times (N-S)} ]+[\ddot{\bfgamma}^1 ,\ldots, \ddot{\bfgamma}^N] \label{decouplenoisecontr}
\end{align}
where
 \begin{align}
[{\bfgamma}^1 ,\ldots, {\bfgamma}^S,\mathbf{0}_{D \times (N-S)} ]^{\rmT}= \bfD_{N}^{\rmH} [{\bfphi}^1 ,\ldots, {\bfphi}^N ]^{\rmT},
 \end{align}
 and
 \begin{align}
 [\ddot{\bfgamma}^1 ,\ldots, \ddot{\bfgamma}^N]^{\rmT}=\bfD_{N}^{\rmH} [\ddot{\bfphi}^1 ,\ldots, \ddot{\bfphi}^N ]^{\rmT}.  \label{linear1}
 \end{align}
Each vector ${\bfgamma}^s$ contains the subspace coefficients corresponding to the $s$-th channel tap ${h}^{s}[m]$. The entries of $\ddot{\bfphi}^n$ are a set of random Gaussian variables with the covariance matrix
\begin{align}
\mathbb{E}\{\ddot{\bfphi}^n{\ddot{\bfphi}^{n \rmH}}\}&=\bfG^{-1} {\bfU^{(\mcP)}}^{\rmH}  \left( \frac{\bfI_{\frac{M}{K}}}{P} \right) {\bfU^{(\mcP)}} \bfG^{-1} \\
&=\frac{1}{P}\bfG^{-1} \label{noisecoeffmean}.
\end{align}
Due to uncorrelated noise over $N$ subchannels, $\{\ddot{\phi}_d^1,\ldots,\ddot{\phi}_d^N\}$ is a set of uncorrelated random Gaussian variables with the same distribution. The same is also true for $\{\ddot{\gamma}_d^1,\ldots,\ddot{\gamma}_d^N\}$ according to Wiener-Khinchine theorem. 
Since only $\{\tilde{\bfgamma}^1 ,\ldots, \tilde{\bfgamma}^S\}$ are relevant to the actual channel realizations and thus have to be fed back to the transmitters.
By neglecting $\{\tilde{\bfgamma}^{S+1} ,\ldots, \tilde{\bfgamma}^N\}$, the SNR can be increase by a factor of $N/S$.

To establish a tractable analysis, we assume that the number of delay taps $S$ is known at the receiver side. For practical wireless channels, the number of taps can be estimated via most significant tap detection \cite{Minn2000,Wan2010}.
}

\subsection{Reformulation of Subspace Representation for the SISO Interference Channels}
With the basis coefficients $\{\tilde{\bfgamma}^1 ,\ldots, \tilde{\bfgamma}^S\}$ obtained from (\ref{linear}), the predicted channel impulse response can be calculated as
\be
\begin{aligned}
	\tilde{\bfh}[m] & =[\tilde{\bfgamma}^1 ,\ldots, \tilde{\bfgamma}^S] ^{\rmT} \bff[m] \\
	&=\bfF[m]  {\tilde{\bfGamma}} \label{recstCh}
\end{aligned}
\ee
where $\bfF \in \mathbb{C}^{S \times DS}$ and $\tilde{\bfGamma} \in \mathbb{C}^{DS \times 1}$ are defined as follows:
\zhinan{
\begin{align}
	\bfF[m]
	&=\left[\begin{array}{cccc}
		\bff[t]^{\rmT} & \mathbf{0}_{D \times 1}^{\rmT} & \ldots & \mathbf{0}_{D \times 1}^{\rmT} \\
		\mathbf{0}_{D \times 1}^{\rmT}  & \bff[t]^{\rmT}  & \ldots & \mathbf{0}_{D \times 1}^{\rmT} \\
		\vdots & \vdots & \ddots  & \vdots \\
		\mathbf{0}_{D \times 1}^{\rmT} &  \mathbf{0}_{D \times 1}^{\rmT} & \ldots & \bff[t]^{\rmT}
		\end{array}\right]
\end{align}
}
and
\begin{align}
	\tilde{\bfGamma}=   {\left[\begin{array}{ccc}
			\tilde{\bfgamma}^1 \\
			\vdots\\
			\tilde{\bfgamma}^S \\
		\end{array}\right]}.
\end{align}
Note that the vector $\tilde{\bfGamma}$ is not unique since the achievable rate for IA is invariant to a norm change and phase rotation of $\tilde{\bfGamma}$ (this will be shown in Sec.~\ref{Ratelossa}). Therefore, it is equivalent to know $\tilde{\bfGamma}$ or $\alpha \tilde{\bfGamma}$ at the transmitter side, where $\alpha \in \mathbb{C}$. Thus, the CSI feedback problem becomes feeding back a point on the Grassmannian manifold $\mathcal{G}_{DS,1}$, which can be realized by vector quantization.


\subsection{Vector Quantization and Feedback}
After the subspace coefficient vector $\tilde{\bfGamma} \in \mathbb{C}^{DS \times 1}$ is obtained at the receiver side, the receiver quantizes the vector according to its codebook and broadcasts the index to the transmitter side through a feedback channel using $N_{\rm d}$ bits. 

If the vector $\tilde{\bfGamma}$ has correlated entries, the design of the optimal codebook is difficult. In the special case with i.i.d entries, the optimal codebook for quantization can be generated numerically using the Grassmannian line-packing approach \cite{Love2003,Mukkavilli2003}. \zhinan{However, it is still challenging to find the optimal codewords which achieve the quantization bound promised by \cite{Dai2008}, except for some specific cases (e.g.~lower dimensional packing problems, such as vector quantization problems, with small codebook size \cite{Pitaval2011}).} To overcome this, random vector quantization (RVQ) codebooks are proposed. The codewords of $\mcC_{\rm rnd}$ are independent unit-norm vectors from the isotropic distribution on the complex unit sphere \cite{Au-yeung2007,Jindal2006}. RVQ is commonly used to analyze the effects of quantization because it is mathematically tractable and asymptotically optimal with a distortion on the order of $2^{-\frac{N_{\rm d}}{DS-1}}$. 

When the vector to be quantized has correlated entires, a better codebook can be designed by skewing the RVQ codebook to match the correlation structure \cite{Giannakis2006,Love2006}. The skewed codebook yields performance gain compared to the RVQ codebook. However, the exact characterization of the quantization error has remained an open question. A recent study in \cite{Raghavan13} tries to derive the SNR loss for single user MIMO beamforming system using skewed codebooks, but it still remains in terms of expectations of eigenvalues.


To overcome the difficulty in obtaining analytical results, we utilize a RVQ codebook, which will facilitate our analytical performance analysis in Sec.~\ref{Ratelossa}. 
Considering a factorization of the covariance matrix $\bfR_{{\tilde\bfGamma}}=\bfLambda \bfLambda^{\rmH}$, the vector ${\tilde\bfGamma}$ can be whiten as
\begin{align}
\breve\bfGamma&=\bfLambda^{-1} {\tilde\bfGamma}.
\end{align}
The covariance matrix $\bfR_{{\tilde\bfGamma}}=\Bdiag({\bflambda^1,\ldots,\bflambda^S})$ contains the covariance matrix of the subspace coefficients for each tap $s$, i.e.
\begin{align}
	\bflambda^s&=\mathbb{E}\{\tilde{\bfgamma}^s{\tilde{\bfgamma}^{s \rmH}}\}\\
	&= \mathbb{E}\{ [\tilde{\bfphi}^1 ,\ldots, \tilde{\bfphi}^N ] \bfd_{s}^{*} \bfd_{s}^{\rmT}  [\tilde{\bfphi}^1 ,\ldots, \tilde{\bfphi}^N ]^{\rmH}\} \label{phid}\\
	&= \bfG^{-1} {\bfU^{(\mcP)}}^{\rmH}\mathbb{E} \{[\bfg'^{1} ,\ldots, \bfg'^{N}  ] \bfd_{s}^{*} \nonumber\\ &\quad \quad  \bfd_{s}^{\rmT}  [\bfg'^{1} ,\ldots, \bfg'^{N} ]^{\rmH}\} {\bfU^{(\mcP)}} \bfG^{-1} \label{phid1}\\	
	&= \bfG^{-1} {\bfU^{(\mcP)}}^{\rmH}  \left( p^{s}\bfR_{\bfh}^{(\mcP)}+\frac{1}{P}\bfI_{\frac{M}{K}}  \right) {\bfU^{(\mcP)}} \bfG^{-1}. \label{mmsepre}
\end{align}
where $\bfd_{s}$ is the $s$-th column of the DFT matrix $\bfD_{N \times S}$. The equations (\ref{phid}), (\ref{phid1}) and (\ref{mmsepre}) are due to (\ref{linear}), (\ref{obtianophi}) and (\ref{channelconvert}), respectively.

We do not assume exact knowledge of the PDP in our work, instead a flat PDP assumption
\begin{align} {\mathbb{E}\{\bfh[m]\bfh[m]^{\rmH}\}=\diag(\[p^{1},\ldots,p^{S}\])}=\frac{N}{S}{\bfI_S}
\end{align}
is used. In case the assumed PDP matches the PDP of the true channel, the vector $\breve\bfGamma$ will be isotropically distributed with uncorrelated entries. 


 The corresponding RVQ codebook  $\mcC_{\rm rnd}$ contains $2^{N_{\rm d}}$ unit-norm vectors, i.e.~$\mcC_{\rm rnd}=\{\hat{\bfGamma}_1,\ldots,\hat{\bfGamma}_{2^{N_{\rm d}}} \}$. Using codebook $\mcC_{\rm rnd}$, the quantized version of $\tilde \bfGamma$ can be obtained as
\begin{align}
\hat \bfGamma= \underset{\hat{\bfGamma}_i \in \mcC_{\rm rnd}}{\arg \max} |\hat{\bfGamma}_i^{\rmH} \breve\bfGamma |.
\end{align}
After receiving the feedback information, the transmitters can reconstruct the quantized vector by adding the correlation $\bfLambda \hat \bfGamma$.

\section{Rate Loss Analysis} \label{Ratelossa}
In this section, we analyze the rate loss of our proposed scheme. We decouple the channel prediction- and quantization-error and derive an upper bound of the rate loss. We show that a smaller subspace dimension is favorable for quantization due to less coefficients. On the other hand, more subspace coefficients will reduce the prediction error for the case of $\mathcal{D}_{\rm{ub}}>1$. Therefore, there exists a tradeoff between quantization error and prediction error when selecting the subspace dimension at a given feedback rate. We develop a subspace dimension switching algorithm to find the best tradeoff such that the rate loss upper bound is minimized. We also characterize the scaling of the required number of feedback bits in order to decouple the rate loss due to imperfect quantization from the transmit power. 


\subsection{Leakage Interference Due to Imperfect CSI}
Imperfect CSI results in residual interference, thus, IA conditions (\ref{cond1}) and (\ref{cond2}) can not be satisfied. Upon reception of the quantized subspace vector ${\hat \bfGamma}$, the quantized channel $\hat{\bfh}[m]$ can be constructed in the same way as (\ref{recstCh}). The precoding vector $\hat\bfv_k^{i}[m]$ and decoding vector $\hat\bfu_k^{i}[m]$ are calculated using $\hat{\bfw}_{k,\ell}[m]= \bfD_{N \times S} {\hat\bfh}_{k,\ell}[m] $ as the true channel, which results in
\begin{align}
\hat\bfu_k^{i}[m]^{\rmH} \hat\bfW_{k,k}[m] \hat\bfv_k^{j}[m]&=0, \quad \forall k,~\forall i\neq j \\ 
\hat\bfu_k^{i}[m]^{\rmH} \hat\bfW_{k,\ell}[m] \hat\bfv_\ell^{j}[m]&=0, \quad \forall k\neq \ell,~\forall i,~j \\ 
\left|\hat\bfu_k^{i}[m]^{\rmH} \hat\bfW_{k,k}[m] \hat\bfv_k^{i}[m]\right| &\geq c>0,  \quad \forall k,~i \,.
\end{align}
We modify the upper bound for the average loss in sum rate \cite{Ayach2012a} for a time-variant channel as
\be
\Delta R < \frac{1}{NT}{ \sum_{k,i}\sum_{m \in \mcT} {\rm{log}}_{2} \left(1+ {\mathbb{E}\left[{\mathcal{I}}_{k,i}^{1} [m]+{\mathcal{I}}_{k,i}^{2} [m]\right]}  \right)}. \label{rateloss}
\ee

We define $\hat{\bfb}_{k,\ell}^{i,j}[m] =\hat{\bfu}_k^{i}[m]^{*} \circ \hat{\bfv}_\ell^{j}[m]$ as the Hadamard product of the decoding vector $\hat{\bfu}_k^{i}[m]$ and precoding vector $\hat{\bfv}_\ell^{j}[m]$. The leakage interference in (\ref{Inter1}) and (\ref{Inter2}) can be rewritten as
\begin{align}
{\mathcal{I}}_{k,i}^{1} [m]&=\sum_{i\neq j} \frac{NP}{d_k} \left|   \bfw_{k,k}[m]^{\rmT} \hat{\bfb}_{k,k}^{i,j}[m] \right|^2, \label{Inter1a} {\rm ~and}\\ 
{\mathcal{I}}_{k,i}^2 [m]&=\sum_{ k\neq \ell}\sum_{j=1}^{d_\ell} \frac{NP}{d_\ell} \left| \bfw_{k,\ell}[m]^{\rmT} \hat{\bfb}_{k,\ell}^{i,j}[m] \right|^2 \label{Inter2a}.
\end{align}
We define the  predicted channel frequency response as
\begin{align}
	{\tilde {\bfw}}_{k,\ell}[m]= [\tilde{w}_{k,\ell}^1 [m],\ldots ,\tilde{w}_{k,\ell}^n [m]]^{\rmT}
\end{align} 
and the prediction error as $\tilde{\bfz}_{k,\ell}[m]= \bfw_{k,\ell}[m]-\tilde {\bfw}_{k,\ell}[m]$. The average power leakage of the inter-stream interference in (\ref{Inter1a}) can be upper bounded by 
\begin{align}
&\mathbb{E}\left[{\mathcal{I}}_{k,i}^{1} \left[m\right]\right] \nonumber\\
&=\sum_{{i\neq j}} \frac{NP}{d_k} \mathbb{E}\left[  \left|  \bfw_{k,k}[m]^{\rmT} \hat{\bfb}_{k,k}^{i,j}[m]   \right|^2
\right]\\
&= \sum_{{i\neq j}} \frac{NP}{d_k} \mathbb{E} \left[  \left| \left( \tilde{\bfw}_{k,k}[m]^{\rmT}+\tilde{\bfz}_{k,k}[m]^{\rmT} \right) \hat{\bfb}_{k,k}^{i,j}[m] \right|^2 \right] \\
&=\sum_{{i\neq j}} \frac{NP}{d_k} \Bigg( \mathbb{E} \bigg[ \left| \tilde{\bfw}_{k,k}[m]^{\rmT} \hat{\bfb}_{k,k}^{i,j}[m] \right|^2 +
 \left| \tilde{\bfz}_{k,k}[m]^{\rmT} \hat{\bfb}_{k,k}^{i,j}[m] \right|^2  \nonumber\\
& \quad + 2{\rm Re} \left( \tilde{\bfw}_{k,k}[m]^{\rmT} \hat{\bfb}_{k,k}^{i,j}[m] \hat{\bfb}_{k,k}^{i,j}[m]^{\rmH} \tilde{\bfz}_{k,k}[m]^{*} \right) \bigg] \Bigg) \label{3terms}\\
&\approx \sum_{{i\neq j}} \Bigg( \underbrace{\frac{NP}{d_k} \mathbb{E} \left[ \left|  \tilde{\bfz}_{k,k}[m]^{\rmT} \hat{\bfb}_{k,k}^{i,j}[m] \right|^2\right]}_{ \tilde{I} _{k,k}^{i,j}[m]}  \nonumber \\
&\quad +\underbrace{ \frac{NP}{d_k} \mathbb{E} \left[ \left|  \tilde{\bfw}_{k,k}[m]^{\rmT} \hat{\bfb}_{k,k}^{i,j}[m]   \right|^2 \right]}_ { \hat{I} _{k,k}^{i,j}[m]}\Bigg) \label{2terms}
\end{align}
where (\ref{2terms}) is obtained by ignoring the term $\mathbb{E} \left[ {\rm Re} \left( \tilde{\bfw}_{k,\ell}[m]^{\rmH} \hat{\bfb}_{k,\ell}^{i,j}[m] \hat{\bfb}_{k,\ell}^{i,j}[m]^{\rmH} \tilde{\bfz}_{k,\ell}[m] \right) \right]$. An equality holds from (\ref{3terms}) to (\ref{2terms}) with the MMSE predictor (\ref{mmsepre}), where $\tilde{\bfz}_{k,\ell}[m]$ is zero-mean Gaussian and independent of $\tilde{\bfw}_{k,\ell}[m]$. However, for our reduced-rank predictor, we notice that an exact characterization of this term $\mathbb{E} \left[ {\rm Re} \left( \tilde{\bfw}_{k,\ell}[m]^{\rmH} \hat{\bfb}_{k,\ell}^{i,j}[m] \hat{\bfb}_{k,\ell}^{i,j}[m]^{\rmH} \tilde{\bfz}_{k,\ell}[m] \right) \right]$ is mathematically intractable. As discussed in Remark~\ref{predrelation}, the reduced-rank predictor is closely related to the MMSE predictor. We also found via simulation that this term is rather small. Similar to (\ref{2terms}), the inter-user interference term in (\ref{Inter2a}) can be upper bounded by 
\begin{align}
\mathbb{E}\left[{\mathcal{I}}_{k,i}^{2} \left[m\right]\right]\approx& \sum_{k \neq \ell}\sum_{j=1}^{d_\ell} \Bigg( \underbrace{\frac{NP}{d_\ell} \mathbb{E} \left[ \left|  \tilde{\bfz}_{k,\ell}[m]^{\rmT} \hat{\bfb}_{k,\ell}^{i,j}[m] \right|^2\right]}_{ \tilde{I} _{k,\ell}^{i,j}[m]}  \nonumber \\
& +\underbrace{ \frac{NP}{d_\ell} \mathbb{E} \left[ \left|  \tilde{\bfw}_{k,\ell}[m]^{\rmT} \hat{\bfb}_{k,\ell}^{i,j}[m]   \right|^2 \right]}_ { \hat{I} _{k,\ell}^{i,j}[m]}\Bigg).\label{2termsa}
\end{align}

In the following Sections \ref{leakduetoprediction} and \ref{leakduetoquan}, we will show that the first and second terms in (\ref{2terms}) and (\ref{2termsa}) are caused by the channel prediction error and the quantization error, respectively. 

\subsection{Leakage Interference Due to Channel Prediction Error}\label{leakduetoprediction}
Defining $[\hat{\bfq}_{k,k}^{i,j}[m]^{\rmT} ,\bfq_{k,k}^{i,j}[m]^{\rmT} ]^{\rmT}=\bfD_N^{\rmH}\hat{\bfb}_{k,k}^{i,j}[m]^{*}$, where $\hat{\bfq}_{k,k}^{i,j}[m] \in \mathbb{C}^{S \times 1}$ and $\bfq_{k,k}^{i,j}[m] \in \mathbb{C}^{(N-S) \times 1}$, and $[\tilde{\bfe}_{xk,k}[m]^{\rmT}, \mathbf{0}_{1 \times (N-S)} ]^{\rmT}=\bfD_N^{\rmH} \tilde{\bfz}_{k,k}[m]$, the first term of the inter-stream interference $\tilde{I} _{k,k}^{i,j}[m]$ in (\ref{2terms}) can be written as

\begin{align}
\tilde{I} _{k,k}^{i,j}[m]=&\frac{NP}{d_k} \mathbb{E} \left[ \left| \tilde{\bfz}_{k,k}[m]^{\rmT} \hat{\bfb}_{k,k}^{i,j}[m] \right|^2\right] \nonumber\\
=&\frac{NP}{d_k} \mathbb{E} \left[ \left| \tilde{\bfe}_{k,k}[m]^{\rmH} \hat{\bfq}_{k,k}^{i,j}[m] \right|^2\right]\label{persevv}\\
\approx&\frac{NP}{d_k} \mathbb{E} \bigg[  \hat{\bfq}_{k,k}^{i,j}[m]^{\rmH} \mathbb{E} \left[\tilde{\bfe}_{k,k}[m]  \tilde{\bfe}_{k,k}[m]^{\rmH}\right] \hat{\bfq}_{k,k}^{i,j}[m]\bigg]\label{assindp}\\
=& \frac{N^2P}{Sd_k} \mathbb{E} \left\|\hat{\bfq}_{k,k}^{i,j}[m] \right\|^2  \cdot {\rm MSE}\left[m,D,\frac{NP}{S}\right] \label{boundmse}\\
=& \tilde{J} _{k,k}^{i,j}[m] \nonumber
\end{align}
where (\ref{persevv}) is due to Parseval's theorem. In order for tractable results,  we arrive at (\ref{assindp}) by assuming the independence of $\hat{\bfq}_{k,\ell}^{i,j}[m]$ and $\tilde{\bfe}_{k,\ell}[m]$, which is the case for the MMSE predictor ($\hat{\bfq}_{k,\ell}^{i,j}[m]$ is a function of $\hat{\bfh}_{k,\ell}[m]$ and therefore $\tilde{\bfh}_{k,\ell}[m]$. The prediction error $\tilde{\bfe}_{k,\ell}[m]$ is independent of $\tilde{\bfh}_{k,\ell}[m]$). For our reduced-rank predictor, it still provides a good approximation due to the close relation between a reduced-rank predictor and the MMSE predictor (see discussion in Remark~\ref{predrelation}). Equation (\ref{boundmse}) follows from $\mathbb{E} \left[\tilde{\bfe}_{k,k}[m]  \tilde{\bfe}_{k,k}[m]^{\rmH}\right]=\frac{\mathbb{E}\|\tilde{\bfe}_{k,k}[m]\|^2}{S} \bfI_S$, where $\|\tilde{\bfe}_{k,k}[m]\|^2=\|\tilde{\bfz}_{k,k}[m]\|^2$. The MSE per subchannel $\mathbb{E}\big[|\tilde{z}_{k,k}\left[m,n\right]|^2\big],~\forall n \in\{1,\ldots,N\}$, is the sum of a square bias and a variance term \cite{Zemen2007} $
{\rm{MSE}}[m,D,\SNR]=\text{bias}^{2}[m,D]+\text{var}[m,D,\SNR] \label{mse} $
where the variance can be approximated by
\begin{align}
\text{var}[m,D,\SNR]=\frac{\bff[m]^{\rmT}\bfG^{-1}\bff[m]}{\SNR}. 
\end{align}
The square bias term is calculated as \cite{Zemen2007}
\begin{align}
&\text{bias}^{2}[m,D]\nonumber\\
&=\int\limits_{-\frac{1}{2}}^{\frac{1}{2}} \left| 1-\bff[m]^{\rmT} \bfG^{-1} \sum_{\ell \in \mcP} \bff[\ell] e^{-j2\pi \nu (m-\ell)} \right|^2 S_h(\nu){\rm d}\nu
\end{align}
where $S_h(\nu)$ denotes the actual power spectral density of the fading process. Due to the removal of the noise terms in (\ref{linear}), the noise variance is reduced by a factor of $N/S$, therefore resulting in an SNR of $\frac{NP}{S}$ in equation (\ref{boundmse}). 

A similar result can be obtained for the first term of the inter-user interference in (\ref{2termsa}), i.e.
\begin{align}
\tilde{I} _{k,\ell}^{i,j}[m] \approx \tilde{J} _{k,\ell}^{i,j}[m]= \frac{N^2P}{Sd_\ell} \mathbb{E} \left\|\hat{\bfq}_{k,\ell}^{i,j}[m] \right\|^2 {\rm MSE}\left[m,D,\frac{NP}{S}\right]. \label{boundmse1} 
\end{align}

\subsection{Leakage Interference Due to Channel Quantization Error}\label{leakduetoquan}
To obtain tractable expressions, we restrict the subsequent analysis to a flat PDP, such that the vector to be quantized $\breve{\bfGamma}_{k,k}$ is isotropically distributed with uncorrelated entires. The interference leakage caused by the quantization error $\hat{I} _{k,k}^{i,j}[m]$ in (\ref{2terms}) can be rewritten as
\begin{align}
\hat{I} _{k,k}^{i,j}[m]=&\frac{NP}{d_k} \mathbb{E} \left[ \left|  \tilde{\bfw}_{k,k}[m]^{\rmT} \hat{\bfb}_{k,k}^{i,j}[m]   \right|^2 \right]  \nonumber\\
=&  \frac{NP}{d_k} \mathbb{E} \left[ \left|  \tilde{\bfh}_{k,k}[m]^{\rmH} \hat{\bfq}_{k,k}^{i,j}[m] \right|^2 \right] \\
=& \frac{NP}{d_k} \mathbb{E} \left[ \left| \hat{\bfq}_{k,k}^{i,j}[m]^{\rmH} \bfF_{k,k}[m]  \bfLambda_{k,k} \breve{\bfGamma}_{k,k}    \right|^2 \right]. \label{repf}
\end{align}
Since $\hat{\bfGamma}_{k,k}$ is the quantized version of $\breve{\bfGamma}_{k,k}$ and $\left\|\hat{\bfGamma}_{k,k}\right\|=1$, from Parseval's theorem we have $\hat{\bfq}_{k,k}^{i,j}[m]^{\rmH} \bfF_{k,k}[m]  \bfLambda_{k,k} \hat{\bfGamma}_{k,k} =0$. We can define an orthonormal basis in $\mathbb{C}^{DS}$ as 
\be
\left\{\hat{\bfGamma}_{k,k}, \frac{\bfLambda_{k,k}  \bfF_{k,k}[m]^{\rmH}  \hat{\bfq}_{k,k}^{i,j}[m] }{\left\|  \bfLambda_{k,k}  \bfF_{k,k}[m]^{\rmH}  \hat{\bfa}_{k,k}^{i,j}[m]  \right\|},\bfa_1,\bfa_2,\ldots,\bfa_{DS-2} \right\}\,,
\ee
where $[\bfa_1,\bfa_2,\ldots,\bfa_{DS-2}]$ is an orthonormal basis of null$\left(\left[\hat{\bfGamma}_{k,k}, \frac{\bfLambda_{k,k}  \bfF_{k,k}[m]^{\rmH}  \hat{\bfq}_{k,k}^{i,j}[m] }{\left\|  \bfLambda_{k,k}  \bfF_{k,k}[m]^{\rmH}  \hat{\bfq}_{k,k}^{i,j}[m]  \right\|}\right]^{\rmH}\right)$. We can decompose $\breve{\bfGamma}_{k,k}$ into the above orthonormal basis, i.e.
\begin{align}
&\left\| \breve{\bfGamma}_{k,k} \right\|^2 \nonumber=\left| \hat{\bfGamma}_{k,k}^{\rmH} \breve{\bfGamma}_{k,k}   \right|^2+ \nonumber\\
&\left|  \frac{ \hat{\bfq}_{k,k}^{i,j}[m]^{\rmH} \bfF_{k,k}[m]  \bfLambda_{k,k}}{\left\|   \hat{\bfq}_{k,k}^{i,j}[m]^{\rmH} \bfF_{k,k}[m]  \bfLambda_{k,k}\right\|} \breve{\bfGamma}_{k,k} \right|^2+ 
\sum_{m=1}^{DS-2} \left| \bfa_{m}^{\rmH}\breve{\bfGamma}_{k,k} \right|^2 . \label{orth}
\end{align}
Inserting (\ref{orth}) into (\ref{repf}) yields
\begin{align}
&\frac{NP}{d_k} \mathbb{E} \left[ \left| \hat{\bfq}_{k,k}^{i,j}[m]^{\rmH} \bfF_{k,k}[m]  \bfLambda_{k,k} \breve{\bfGamma}_{k,k}    \right|^2 \right]\\
&=\frac{NP}{d_k} \mathbb{E} \Bigg[ \left\|   \hat{\bfq}_{k,k}^{i,j}[m]^{\rmH} \bfF_{k,k}[m]  \bfLambda_{k,k}\right\|^2 \nonumber\\
& \ \qquad \left( \left\|\breve{\bfGamma}_{k,k}\right\|^2- |\hat{\bfGamma}_{k,k}^{\rmH} \breve{\bfGamma}_{k,k}|^2 -\sum_{m=1}^{DS-2} \left| \bfd_{m}^{\rmH}\breve{\bfGamma}_{k,k} \right|^2 \right) \Bigg] \\
&=\frac{NP}{d_k(DS-1)} \mathbb{E}\left\|  \hat{\bfq}_{k,k}^{i,j}[m]^{\rmH} \bfF_{k,k}[m]  \bfLambda_{k,k} \right\|^2 \nonumber\\
&\qquad \qquad \qquad \qquad \qquad \times \mathbb{E}\left[ \left\|\breve{\bfGamma}_{k,k}\right\|^2- |\hat{\bfGamma}_{k,k}^{\rmH} \breve{\bfGamma}_{k,k}|^2 \right] \label{proj}\\
&= \frac{NP}{d_k(DS-1)} \mathbb{E} \left\| 
 \hat{\bfq}_{k,k}^{i,j}[m]^{\rmH} \bfF_{k,k}[m]  \bfLambda_{k,k}\right\|^2 \mathbb{E} \left\|\breve{\bfGamma}_{k,k}\right\|^2 \nonumber \\
& \ \qquad \qquad \qquad \qquad \qquad \times \mathbb{E} \left[ d_{\rm c}^2 \left( \frac{\breve{\bfGamma}_{k,k}}{\left\|\breve{\bfGamma}_{k,k}\right\|},\hat{\bfGamma}_{k,k}  \right) \right] \ \label{qbound} \
\end{align}
where $d_{\rm c}(\bfx_1,\bfx_2)=\sqrt{(1-|\bfx_1^{\rmH} \bfx_2|^2)}$ is the chordal distance between two unit norm vectors $\bfx_1$ and $\bfx_2$. Equation (\ref{proj}) follows from the fact that the quantization error is isotropic in the nullspace of $\hat{\bfGamma}_{k,k}$ and therefore the average power of $\breve{\bfGamma}_{k,k}$ in each dimension of $\left\{\frac{\bfLambda_{k,k}  \bfF_{k,k}[m]^{\rmH}  \hat{\bfq}_{k,k}^{i,j}[m] }{\left\|  \bfLambda_{k,k}  \bfF_{k,k}[m]^{\rmH}  \hat{\bfq}_{k,k}^{i,j}[m]  \right\|},\bfd_1,\bfd_2,\ldots,\bfd_{DS-2} \right\}$ is equal. Equation (\ref{qbound}) follows from the independence of the norm and the angle of $\breve{\bfGamma}_{k,k}$. 

Equation (\ref{qbound}) shows that the leakage interference can be bounded by the chordal distance between the true and the quantized subspace coefficients.  The term $Q(N_{\rmd}) =\mathbb{E} \left[ d_{\rm c}^2 \left( \frac{\breve{\bfGamma}_{k,\ell}}{\|\breve{\bfGamma}_{k,\ell}\|},\hat{\bfGamma}_{k,\ell}  \right) \right]$ in (\ref{qbound}) is the expectation of the quantization error. As shown in \cite{Dai2008}, for quantizing a vector arbitrarily distributed on the Grassmannian manifold $\mathcal{G}_{DS,1}$ using RVQ, the second moment of the chordal distance using $N_{\rmd}$ quantization bits can be bounded as
\begin{align}
Q(N_{\rmd}) \leq \frac{\Gamma(\frac{1}{DS-1})}{DS-1}{(c 2^{N_{\rmd}})}^{-\frac{1}{DS-1}}\,,
\end{align}
where $\Gamma(\cdot)$ denotes the Gamma function.

%

Furthermore, we have 
\begin{align}
\left\|\hat{\bfq}_{k,k}^{i,j}[m]^{\rmH} \bfF_{k,k}[m]  \bfLambda_{k,k}\right\|^2&=\sum_{s=1}^{S}\left\| \hat{\bfq}_{k,k}^{i,j}[m,s] \bflambda_{k,k}\bff_{k,k}[m]  \right\|^2\nonumber\\
&= \left\|\hat{\bfq}_{k,k}^{i,j}[m]\right\|^2 \| \bflambda_{k,k}\bff_{k,k}[m] \|^2
\end{align}
and 
\begin{align}
\mathbb{E}\|\tilde{\bfGamma}_{k,k}\|^2=DS \label{expeeta}, 
\end{align}
according to Appendix \ref{app0}. Plugging the above results into (\ref{qbound}), the inter-stream interference leakage caused by quantization error $\hat{I} _{k,k}^{i,j}[m]$ can be finally bounded by
\begin{align}
\hat{I} _{k,k}^{i,j}[m] \leq & \hat{J} _{k,k}^{i,j}[m] \nonumber\\
 = &\frac{NPDS}{d_k  (DS-1)} \mathbb{E}\left\|\hat{\bfq}_{k,k}^{i,j}[m] \right\|^2 \left\| \bflambda_{k,k} \bff_{k,k}[m] \right\|^2 Q(N_{\rmd}). \label{boundNd}
\end{align}
Accordingly, the inter-user interference can be bounded as
\begin{align}
\hat{I} _{k,\ell}^{i,j}[m] \leq &  \hat{J} _{k,\ell}^{i,j}[m]\nonumber\\
  =& \frac{NPDS}{d_\ell  (DS-1)} \mathbb{E}\left\|\hat{\bfq}_{k,\ell}^{i,j}[m] \right\|^2 \left\| \bflambda_{k,\ell} \bff_{k,\ell}[m]\right\|^2 Q(N_{\rmd}). \label{boundNd1}
\end{align}
The only stochastic part in the equation is $\left\|\hat{\bfq}_{k,\ell}^{i,j}[m] \right\|^2$, whose value relies on the applied IA algorithm. 



\begin{theorem} \label{themdeq1}
When the proposed prediction and limited feedback strategy is used for IA CSI feedback, the average rate loss due to channel prediction and quantization can be upper bound by
\begin{align}
\Delta R \lesssim \Delta R_{\rm ub}&=
 \frac{1}{NT}\sum_{k}\sum_{m \in \mcT}  {d_k}   {\rm{log}}_{2} \Bigg(1+ NP \left(K-\frac{1}{d_k}\right) \cdot \nonumber \\
&  \bigg(\frac{N}{S} {\rm MSE}\left[m,D,\frac{NP}{S}\right] + \frac{  DS \zeta[m] Q(N_{\rmd})}{DS-1} \bigg)  \Bigg),  \label{finalbound}
\end{align}
where $\zeta[m]=\left\| \bflambda_{k,\ell} \bff_{k,\ell}[m] \right\|^2$.
\end{theorem}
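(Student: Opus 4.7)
The plan is to begin with the rate-loss bound~(\ref{rateloss}) and substitute the decompositions~(\ref{2terms}) and~(\ref{2termsa}) that split each interference expectation into a prediction-error piece $\tilde I_{k,\ell}^{i,j}[m]$ and a quantization-error piece $\hat I_{k,\ell}^{i,j}[m]$; these are then replaced by the closed-form bounds $\tilde J_{k,\ell}^{i,j}[m]$ in~(\ref{boundmse})--(\ref{boundmse1}) and $\hat J_{k,\ell}^{i,j}[m]$ in~(\ref{boundNd})--(\ref{boundNd1}). After this substitution, the proof reduces to (i) removing the random factor $\mathbb{E}\|\hat{\bfq}_{k,\ell}^{i,j}[m]\|^2$ that appears in both $\tilde J$ and $\hat J$, and (ii) carrying out the double sum $\sum_{j\ne i}(\cdot)_{k,k}+\sum_{\ell\ne k}\sum_{j=1}^{d_\ell}(\cdot)_{k,\ell}$ over all interfering pairs.

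For step (i), I would use Parseval together with the Hadamard-product structure of $\hat{\bfb}_{k,\ell}^{i,j}[m]=\hat{\bfu}_k^{i}[m]^{*}\circ\hat{\bfv}_\ell^{j}[m]$: since $\hat{\bfu}_k^{i}[m]$ and $\hat{\bfv}_\ell^{j}[m]$ are unit-norm, $|[\hat{\bfb}]_n|^2=|[\hat{\bfu}]_n|^2|[\hat{\bfv}]_n|^2\le|[\hat{\bfv}]_n|^2$, whence $\|\hat{\bfb}_{k,\ell}^{i,j}[m]\|^2\le 1$. Because $\bfD_N$ is unitary, $\|\hat{\bfq}_{k,\ell}^{i,j}[m]\|^2\le\|\hat{\bfb}_{k,\ell}^{i,j}[m]\|^2\le 1$, so $\mathbb{E}\|\hat{\bfq}_{k,\ell}^{i,j}[m]\|^2\le 1$ uniformly in $(k,\ell,i,j,m)$. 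Under the common-statistics assumption stated in Section~\ref{sec2}, the MSE term and the quantity $\zeta[m]=\|\bflambda_{k,\ell}\bff_{k,\ell}[m]\|^2$ do not depend on the pair $(k,\ell)$ and factor out of the sum over interferers.

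For step (ii), each interfering pair contributes a weight $1/d_\ell$: the inter-stream sum yields $(d_k-1)/d_k=1-1/d_k$, while the inter-user sum yields $\sum_{\ell\ne k}\sum_{j=1}^{d_\ell}1/d_\ell=K-1$, and the two together produce the prefactor $K-1/d_k$ in the theorem. Since the resulting bound inside the logarithm is independent of $i$, the outer sum $\sum_{k,i}\log_2(\cdot)$ collapses to $\sum_k d_k\log_2(\cdot)$, giving~(\ref{finalbound}). The main obstacle is not algebraic but conceptual: the inequality rests on the approximations $\approx$ in~(\ref{2terms}) and~(\ref{assindp}), which discard the cross term between prediction error and signal and treat $\hat{\bfq}_{k,\ell}^{i,j}[m]$ as independent of $\tilde{\bfe}_{k,\ell}[m]$. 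Both relations are exact for the MMSE predictor but only approximate for the reduced-rank predictor (see Remark~\ref{predrelation}); this is precisely why the theorem is stated with $\lesssim$ rather than $\le$.
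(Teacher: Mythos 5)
Your proposal is correct and follows essentially the same route as the paper's proof: insert the prediction and quantization bounds into the rate-loss expression, drop the common factor via $\|\hat{\bfq}_{k,\ell}^{i,j}[m]\|^2\le\|\hat{\bfb}_{k,\ell}^{i,j}[m]\|^2\le 1$ (the paper uses the slightly cruder $\sum_n|u_n|^2|v_n|^2<\sum_n|u_n|^2\sum_n|v_n|^2$, yours via the termwise bound is equivalent in effect), and sum over interferers to obtain the prefactor $K-1/d_k$. Your accounting of the double sum and your identification of where the $\lesssim$ (rather than $\le$) comes from are both accurate and in fact more explicit than the paper's one-line proof.
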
 
\begin{proof}
Equation (\ref{finalbound}) is obtained by inserting (\ref{boundmse}), (\ref{boundmse1}), (\ref{boundNd}) and (\ref{boundNd1}) into (\ref{rateloss})	and using the fact $\left\|\hat{\bfq}_{k,\ell}^{i,j}[m] \right\|^2<1$, $\forall (i,k,j,\ell)$. This can be shown as $\left\|\hat{\bfq}_{k,\ell}^{i,j}[m] \right\|^2 \leq \left\|\hat{\bfb}_{k,\ell}^{i,j}[m] \right\|^2=\sum_{n=1}^N \left|\hat{u}_{i}^{k}[m,n]\right|^2  \left|\hat{v}_{j}^{\ell}[m,n]\right|^2<\sum_{n=1}^N \left|\hat{u}_{i}^{k}[m,n]\right|^2  \sum_{n=1}^N \left|\hat{v}_{j}^{\ell}[m,n]\right|^2=1$.
\end{proof}

\begin{remark}  \label{themdeq1rmk}
We notice that the rate loss upper bound (\ref{finalbound}) derived using the method \cite{Ayach2012a} is known to be loose especially when CSI quality is poor, mainly due to the use of Jensen's inequality. Besides, we use the fact $\left\|\hat{\bfq}_{k,\ell}^{i,j}[m] \right\|^2<1$, which further loosens the bound. However, the term $\left\|\hat{\bfq}_{k,\ell}^{i,j}[m] \right\|^2$ exists in both the prediction and quantization errors, thus using this inequality is not critical for the purpose of deriving a subspace switching algorithm in Section \ref{adptsds}, especially at high SNRs.
\end{remark} 
\begin{theorem} \label{themdeq2}
The sum rate loss due to the quantization error can be bounded by a finite value when $P\longrightarrow\infty$, if the number of feedback bits per receiver grows as
	\begin{align}
	N_{\rm d}= (DS-1) \log_{2} P. \label{ndscaling}
	\end{align}
\end{theorem}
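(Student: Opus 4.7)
The plan is to isolate the quantization contribution inside the rate-loss upper bound of Theorem~\ref{themdeq1} and show that, under the prescribed scaling of $N_{\rm d}$, the product $P\cdot Q(N_{\rm d})$ stays bounded as $P\to\infty$. Concretely, among the two error sources appearing in~(\ref{finalbound}), only $Q(N_{\rmd})$ depends on the feedback rate, so I would first split the argument of $\log_2(\cdot)$ into a prediction part (independent of $N_{\rm d}$) and a quantization part proportional to
\begin{equation}
\Delta_{\rm quant}[m] \;=\; NP\!\left(K-\tfrac{1}{d_k}\right)\frac{DS\,\zeta[m]\,Q(N_{\rmd})}{DS-1}.
\end{equation}
Since $\log_2(1+a+b)\le \log_2(1+a)+\log_2(1+b)$ for $a,b\ge 0$, this decomposition lets me treat the quantization-induced rate loss separately from the prediction-induced rate loss.

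Next I would substitute the RVQ distortion bound stated just before the theorem,
\begin{equation}
Q(N_{\rmd}) \;\le\; \frac{\Gamma\!\bigl(\tfrac{1}{DS-1}\bigr)}{DS-1}\bigl(c\,2^{N_{\rmd}}\bigr)^{-\frac{1}{DS-1}},
\end{equation}
and plug in the proposed scaling $N_{\rm d}=(DS-1)\log_2 P$. This gives $2^{-N_{\rm d}/(DS-1)}=P^{-1}$, so that $P\cdot Q(N_{\rmd})$ is upper bounded by a constant $C_0$ independent of $P$ (depending only on $D$, $S$, and $c$). Consequently $\Delta_{\rm quant}[m]\le C_0\, N(K-1/d_k)\,\tfrac{DS\,\zeta[m]}{DS-1}$, which is finite and $P$-independent. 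The bound $\zeta[m]=\|\bflambda_{k,\ell}\bff_{k,\ell}[m]\|^2$ is itself finite because $\bflambda_{k,\ell}$ arises from a bounded covariance matrix and $\bff[m]$ from bounded DPS sequences.

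Finally I would take the outer sum/average over $k$ and $m\in\mcT$; since each summand $\log_2(1+\Delta_{\rm quant}[m])$ is bounded by $\log_2(1+C_1)$ for some finite $C_1$, the total quantization-induced rate loss is bounded by a finite constant as $P\to\infty$, establishing the claim. I do not expect a genuine obstacle here: the statement is essentially a matching-rates computation, the one subtlety being the justification that the $\zeta[m]$ factor and the stochastic term $\|\hat{\bfq}_{k,\ell}^{i,j}[m]\|^2\le 1$ (used in Theorem~\ref{themdeq1}) do not hide $P$-dependence, which follows directly from the definitions in Sections~\ref{Trfb} and~\ref{leakduetoquan}.
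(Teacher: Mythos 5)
Your proposal is correct and follows essentially the same route as the paper: the appendix proof likewise splits the rate-loss bound into prediction and quantization terms via $\log_2(1+A+B)<\log_2(1+A)+\log_2(1+B)$ and then argues that with $N_{\rm d}=(DS-1)\log_2 P$ the quantization-induced interference power is bounded independently of $P$. You actually spell out the key cancellation $P\cdot Q(N_{\rm d})=O(1)$ more explicitly than the paper does, but there is no substantive difference in approach.
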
 
\begin{proof}
	See Appendix \ref{app2}.
\end{proof}

\zhinan{  
Note that practical implementation of large codebooks in order to achieve a low quantization error remains a long-standing problem and is still under investigation. Ongoing research topics that are helpful in complexity reduction includes progressive refinement \cite{Heath2009a} and hierarchical codebooks \cite{Boccardi2007}.}

\subsection{Adaptive Subspace Dimension Switching Algorithm} \label{adptsds}
When the subspace coefficients are unquantized, the optimal subspace dimension that minimizes the prediction error is given by (\ref{optD}). However, for $\mathcal{D}_{\rm{ub}}>1$, a subspace dimension higher than one is favorable for channel prediction, while resulting in a higher quantization error. Hence, a limited feedback system exhibits a tradeoff between the quality of channel prediction and quantization. The selection of the subspace dimension to find the best tradeoff becomes more relevant and thus, a selection metric is needed for this purpose. The rate loss upper bound developed in (\ref{finalbound}) is suitable. We propose an {\em adaptive subspace dimension switching} algorithm, which finds the subspace dimension minimizing (\ref{finalbound}), i.e.~
\begin{align}
\mathcal{D}=\underset{D \in \{1,\ldots,\mathcal{D}_{\rm ub}\}}{\arg \min} {\Delta R_{\rm ub}  }. \label{tradeoff}
\end{align}

%
%
%

\section {Simulation Results} \label{numersim}
In this section, the sum rate of the proposed scheme is evaluated through Monte-Carlo (MC) simulations. For the IA design in this section, we use the closed-form IA algorithm \cite{Cadambe2008a}  over $N=5$ channel extensions with an additional precoding subspace optimization \cite{Fadlallah2014}, since it has been shown that the original closed-from IA solution \cite{Cadambe2008a} yields low rate if no further optimization is performed \cite{Fadlallah2014,Kim2010}.
We consider a $K=3$ user interference channel, where each channel has $S$ delay taps and a flat PDP $\mathbb{E}\{\bfh_{k,\ell}[m]\bfh_{k,\ell}[m]^{\rmH}\}=\frac{N}{S}{\bfI_S}$. Each delay tap $h_{k,\ell}[m,s]$ is temporally correlated according to Clarke's model \cite{Clarke1968} with $R_{\bfh_{k,\ell}}[m]={{J}}_0(2\pi\nu_{\rm{D}}m)$, where ${J}_0$ is the $0$-th order Bessel function of the first kind. The OFDM symbol rate $1/T_s=1.4 \times 10^4$Hz is chosen according to the 3GPP LTE standard\cite{ieeephy}. The carrier frequency $f_c=2.5{\rm GHz}$. \zhinan{The normalized Doppler frequency is obtained as $\nu_{\rm D}= v f_c T_s/ c_0$, where $c_0$ is the speed of light, and $v$ is the relative velocity between transmitter and receiver.} In order to enable the performance analysis with exponentially large codebooks,  we replace the RVQ process by the statistical model of the quantization error using random perturbations \cite[Sec.~VI.B]{Rezaee2012}, which has been shown to be a good approximation of the quantization error using RVQ.


\subsection {Validation of The Rate Analysis}
First, we examine the effect of imperfect channel prediction and quantization. Fig.~\ref{ialeakage} shows the power of leakage interference (for a specific $(k,\ell)$ and $(i,j)$) versus the evolution of time for $\nu_{\rmD}=0.001$ (6.05 km/h). 
The leakage powers due to prediction error and quantization error are shown respectively for MC simulations of $\tilde{I} _{k,\ell}^{i,j}[m]$ and $\hat{I} _{k,\ell}^{i,j}[m]$, and for the analytical upper bound $\tilde{J} _{k,\ell}^{i,j}[m]$ and $\hat{J} _{k,\ell}^{i,j}[m]$. Note that there still exists a stochastic part $ \mathbb{E}\left\|\hat{\bfq}_{k,\ell}^{i,j}[m] \right\|^2$ in $\hat{J} _{k,\ell}^{i,j}[m]$ and $\tilde{J} _{k,\ell}^{i,j}[m]$. However, as explained in Remark \ref{themdeq1rmk}, using the upper bound $\left\|\hat{\bfq}_{k,\ell}^{i,j}[m] \right\|^2<1$ has only a minor impact on the subspace switching algorithm, and thus the deterministic part of $\hat{J} _{k,\ell}^{i,j}[m]$ and $\tilde{J} _{k,\ell}^{i,j}[m]$ are more relevant. Therefore, we take an empirical value of $ \mathbb{E}\left\|\hat{\bfq}_{k,\ell}^{i,j}[m] \right\|^2$ from the simulation in order to make the comparison with the true leakage power.  We can observe that the leakage due to the prediction error increases over time due to increased MSE. The leakage due to the quantization error is almost a constant throughout the frame. In addition, the results corresponding to MC simulation and the analytical upper bound are quite close. The sum of both leakage terms is slightly higher than the true interference leakage power due to the ignorance of the last term in (\ref{3terms}). \zhinan{For comparison, we include also the prediction error without noise reduction as described in Sec.~\ref{noiosereduction}. It can be seen that the prediction error is larger due to the higher noise level.} Note that the interference leakage with non-flat PDPs (not shown) is similar to the one with flat PDP and matches well with the analytical bound as well.  
\begin{figure}[t]
	\centering
	\includegraphics[width=1\columnwidth]	{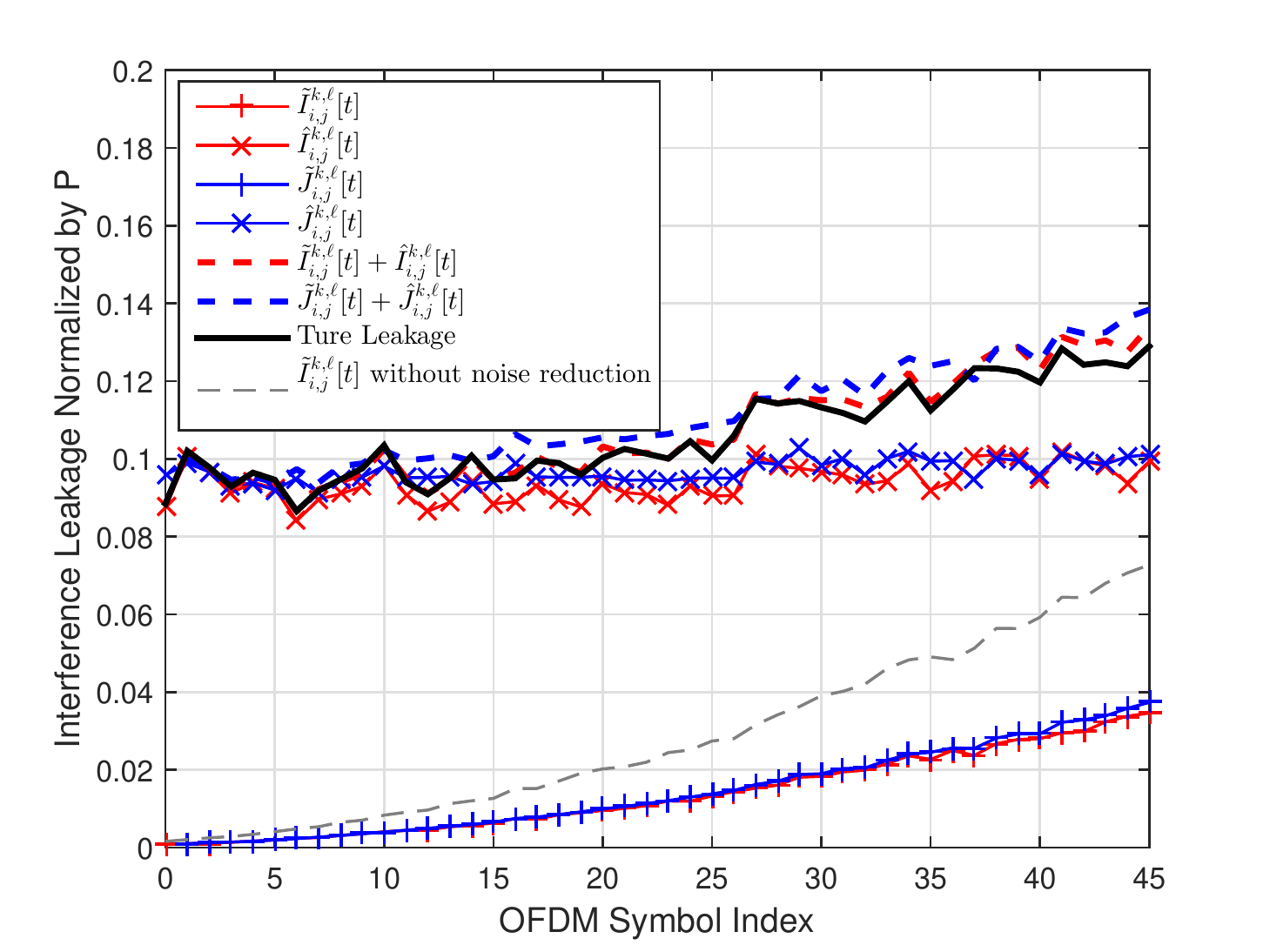}
	\caption{Evolution of Interference leakage with time at SNR=$25$dB and normalized Doppler frequency $\nu_{\rm D}=0.001$. The length of the pilot sequence $M=15$. The length of the payload $T=45$. The number of channel taps $S=3$. The number of symbol extensions $N=5$. The number of feedback bits $N_{\rmd}=15$.} 
	\label{ialeakage}
\end{figure}

\begin{figure}[htbp]
	\centering
	\includegraphics[width=1\columnwidth]	{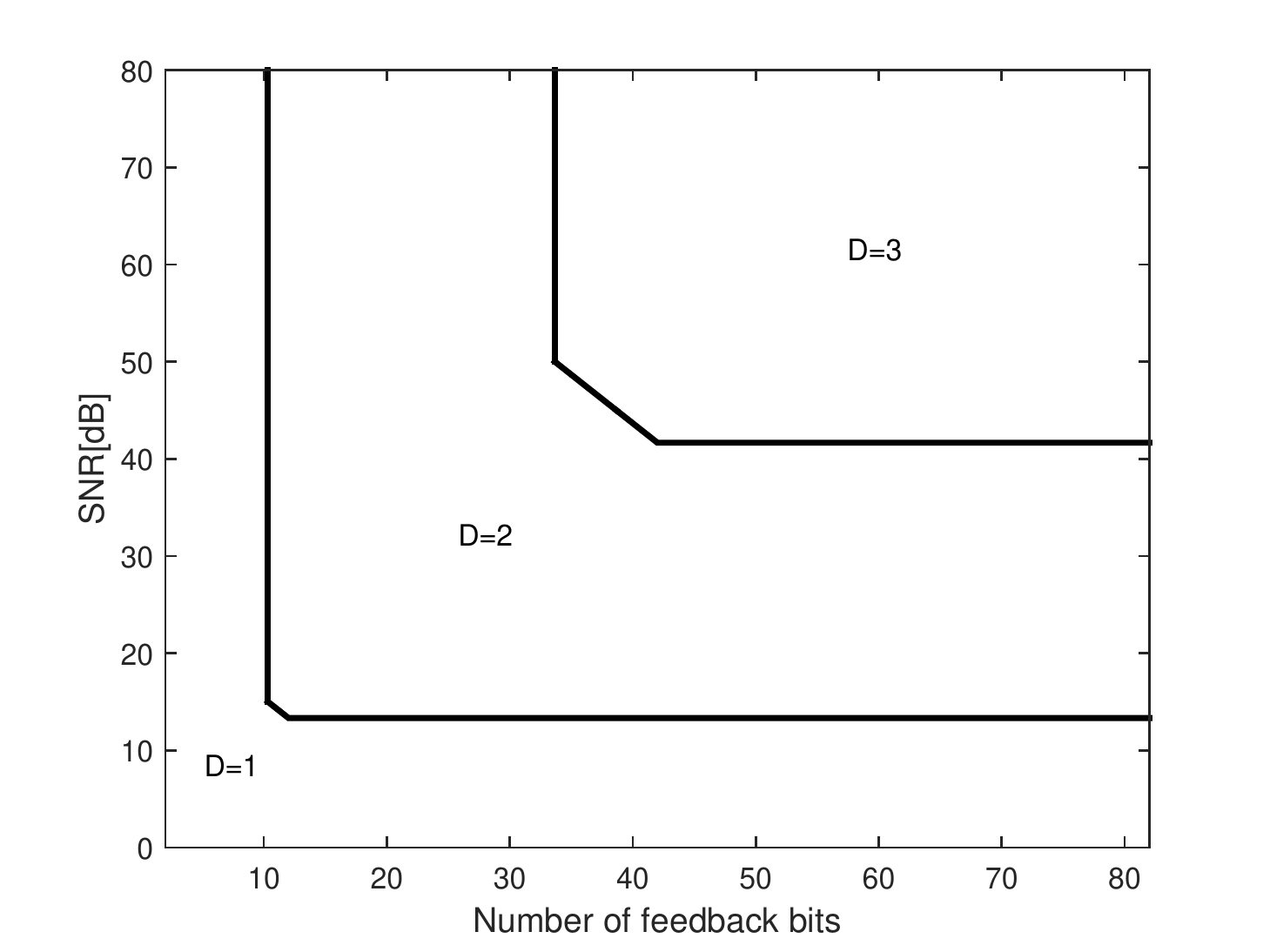}
	\caption{Subspace dimension obtained according to (\ref{tradeoff}), as a function of SNR and the number of feedback bits at normalized Doppler frequency $\nu_{\rm D}=0.004$. The length of the pilot sequence $M=15$. The length of the payload $T=45$. The number of channel taps $S=2$. The number of symbol extensions $N=5$.} 
	\label{subspaceDcontour}
\end{figure}

\begin{figure}[htbp]
	\centering
	\includegraphics[width=1\columnwidth]	{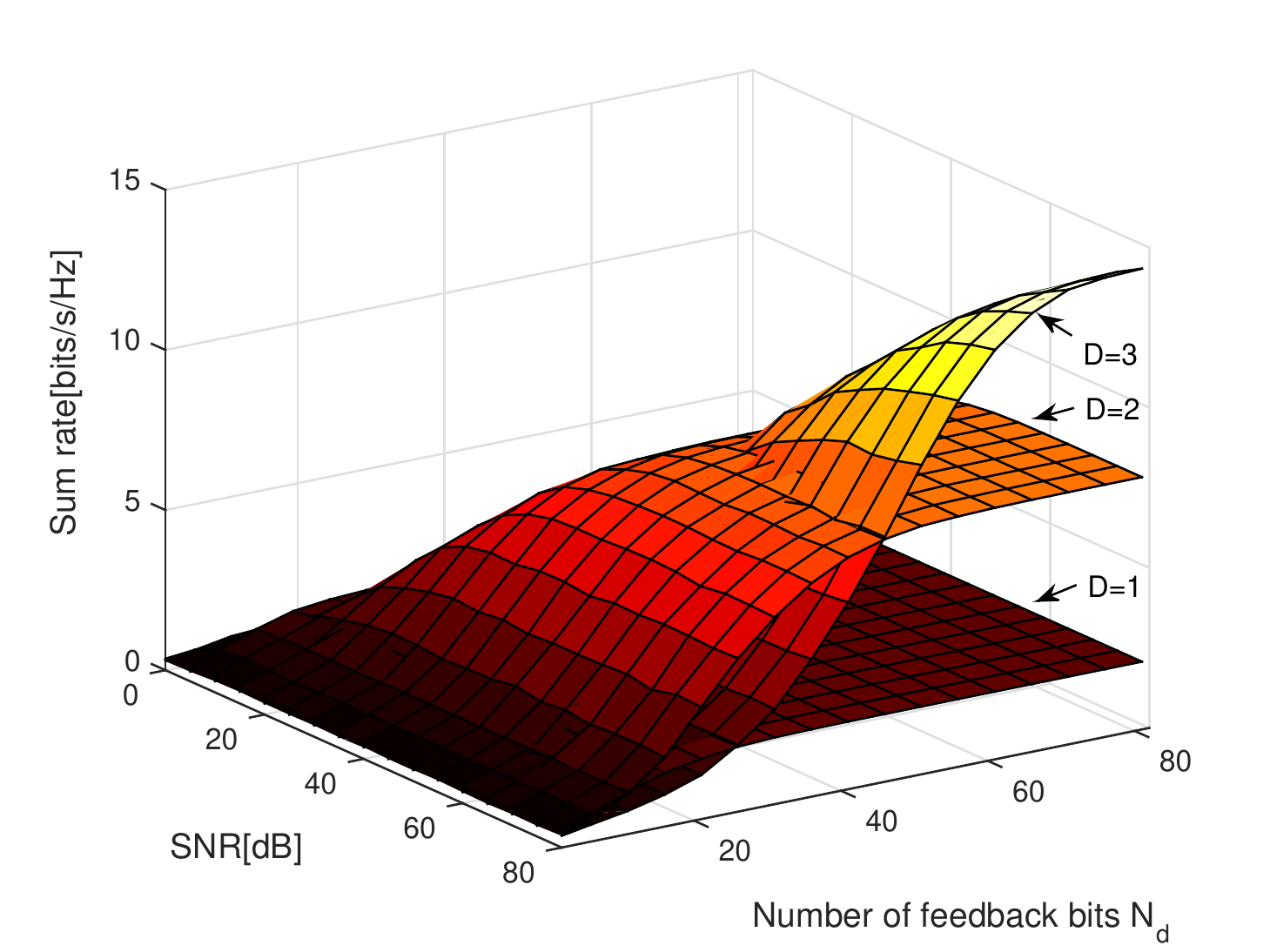}
	\caption{Sum rate with subspace dimension $D \in \{1,2,3\}$, as a function of SNR and the number of feedback bits at normalized Doppler frequency $\nu_{\rm D}=0.004$. The length of the pilot sequence $M=15$. The length of the payload $T=45$. The number of channel taps $S=2$. The number of symbol extensions $N=5$.} 
	\label{subspaceDrate}
\end{figure}

\subsection {Choice of Subspace Dimension}
\begin{figure}[t]
	\centering
	\includegraphics[width=1\columnwidth]	{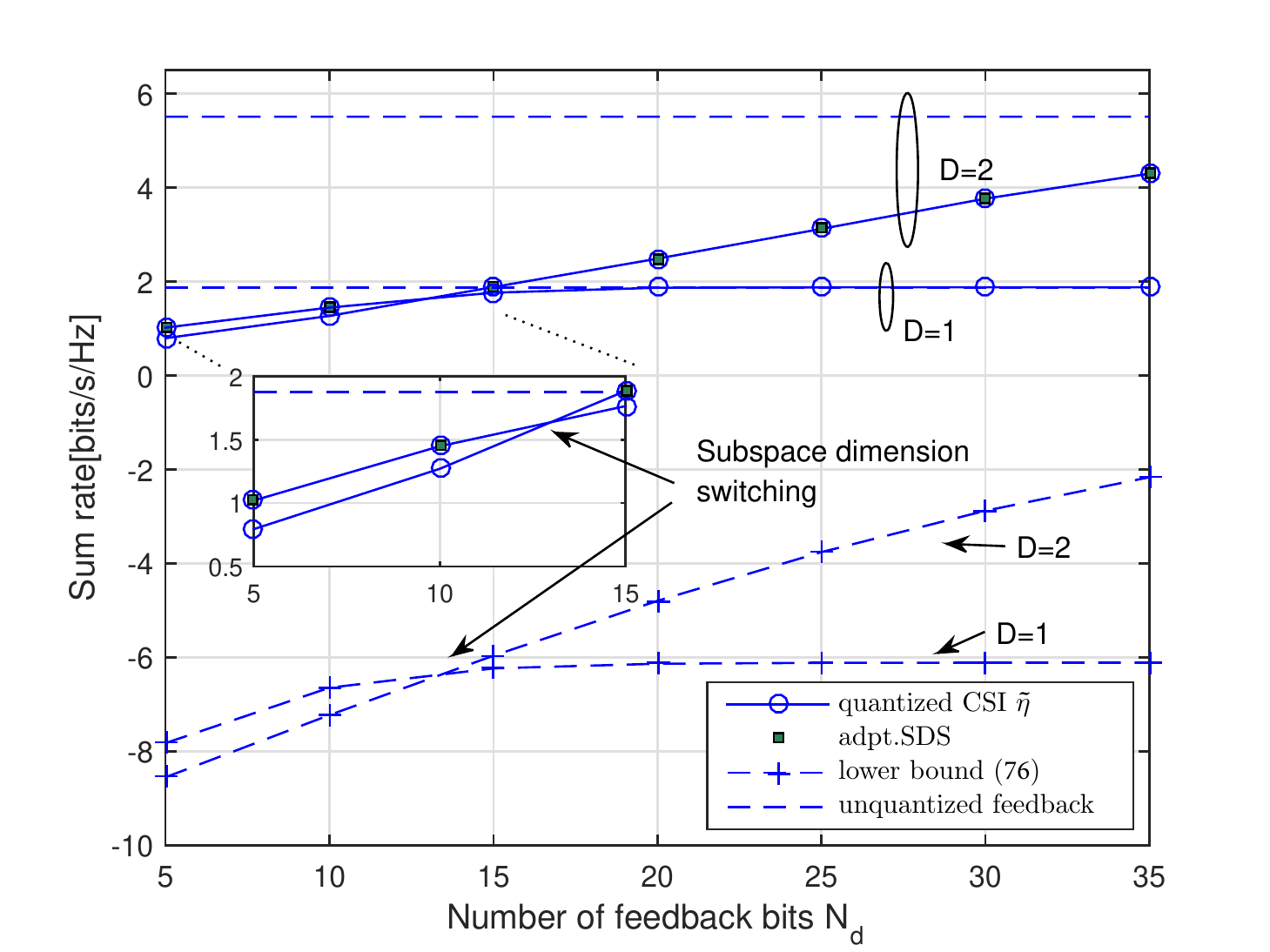}
	\caption{Sum rate versus the number of feedback bits at SNR=~30dB and the normalized Doppler frequency $\nu_{\rm D}=0.004$. The length of the pilot sequence $M=15$. The length of the payload $T=45$. The number of channel taps $S=3$. The number of symbol extensions $N=5$.} 
	\label{fig2}
\end{figure}

\zhinan{In this section, we demonstrate the subspace choice and the accuracy of the subspace dimension switching algorithm with various SNRs, number of feedback bits and number of delay taps. }

Fig.~\ref{subspaceDcontour} shows contour lines of the subspace dimension obtained according to (\ref{tradeoff}) as a function of SNR and the number of feedback bits. \zhinan{In order to demonstrate the subspace dimension switching algorithm, we select $\nu_{\rm D}=0.004$ (24.2km/h)  for the simulation, which represents the velocity of a slowly moving car.}
It can be seen that a higher subspace dimension is suggested when both SNR and the number of feedback bits are high. This is because higher SNR allows for high subspace dimension for channel prediction due to the relatively small variance of a reduced-rank predictor. This will also result in more subspace coefficients, which in turn require more bits for feedback to maintain a low quantization error. In case of a low feedback rate, a lower subspace dimension is still favorable in order for a low quantization error, and therefore the best tradeoff between prediction and quantization.

Fig.~\ref{subspaceDrate} illustrates the sum rate using the same setup as in Fig.~\ref{subspaceDcontour}, with subspace dimension $D\in\{1,2,3\}$, respectively. It can be seen that the dimension suggested in Fig.~\ref{subspaceDcontour} matches well with the dimension that achieves a higher rate.

\begin{figure}[t]
	\centering
	\includegraphics[width=1\columnwidth]{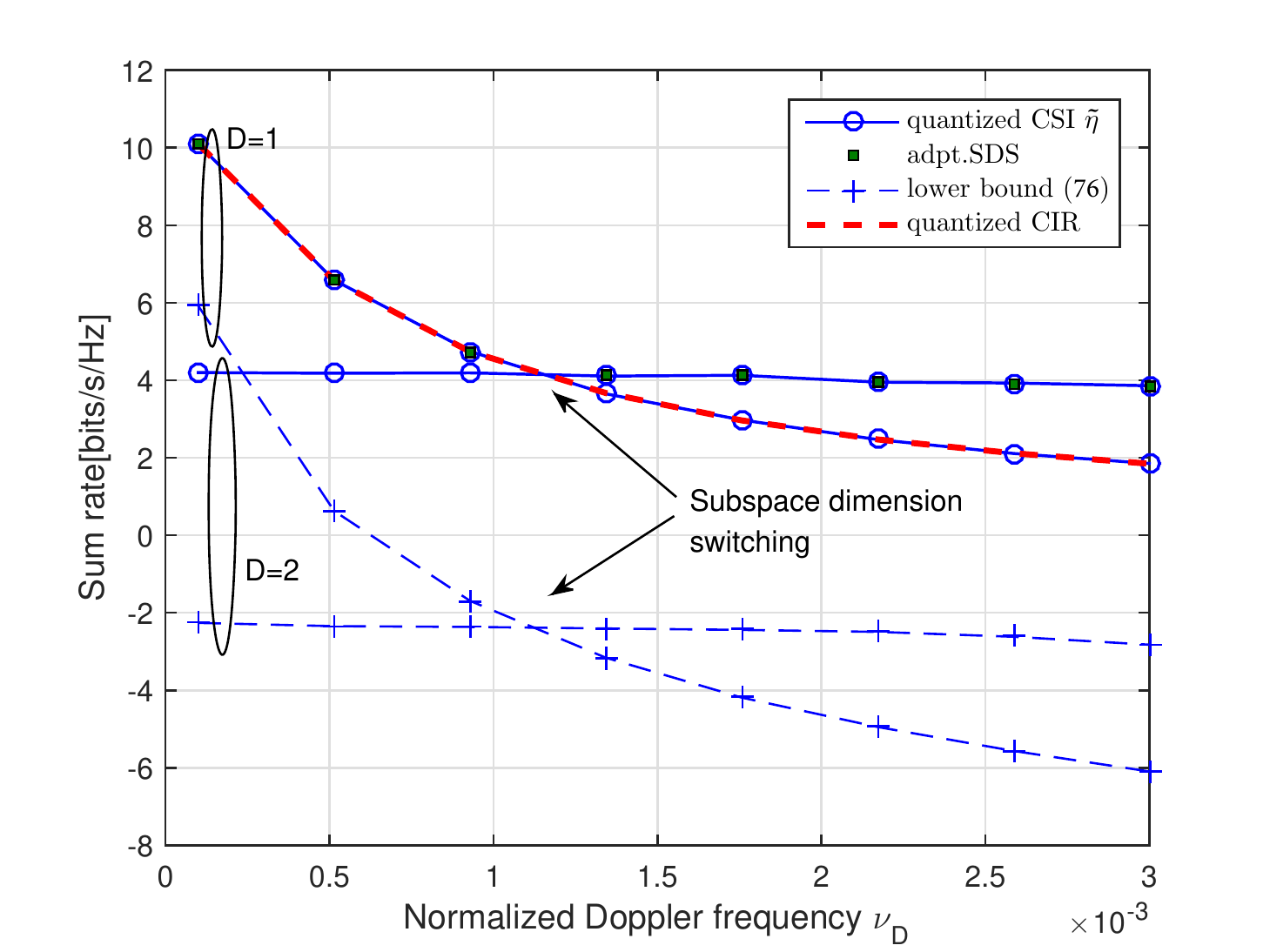}
	\caption{Sum rate degradation versus the increase of the normalized Doppler frequency. A 0.5 ms feedback delay $T_{\rm D}=7$ is considered $\forall k,\ell$. The length of the pilot sequence $M=15$. The length of the payload $T=45$. The number of feedback bits $N_{\rm d}=30$. The number of channel taps $S=3$. The number of symbol extensions $N=5$.}
	\label{rate_doppler}
\end{figure}
Fig.~\ref{fig2} shows the sum rate versus the number of feedback bits at an SNR=30dB and the normalized Doppler frequency $\nu_D =0.004$ (24.2km/h).  The lower bound of the average achievable rate is defined as 
\begin{align}
R_{\rm lb}=\mathbb{E}\left[ R_{\rm sum}^{\rm perfect} \right]-\Delta R_{\rm ub} .
\end{align}
Due to the fact that the average sum rate given perfect CSI is a constant, we can use this lower bound to examine the effectiveness of the subspace switching algorithm ($\ref{tradeoff}$). \zhinan{Note that the rate lower bound can be negative due to the looseness of the rate loss upper bound $\Delta R_{\rm ub}$, as discussed in Remark \ref{themdeq1rmk}.} For such a setting, (\ref{optD}) suggests that the optimal subspace dimension $\mathcal{D}_{\rm ub}$ is 2 for unquantized feedback. \zhinann{However, as discussed earlier, higher dimension $D$ will lead to a larger quantization error due to more subspace coefficients.} To find the best subspace dimension, we present the achieved rate and the corresponding lower bound at both $D=\{1,2\}$. It can be observed that the achieved sum rate increases with the number of feedback bits. For $D=1$, it achieves an initial higher rate due to smaller quantization error. The achieved rate becomes a constant with the increase of $N_d$ due to the dominance of the prediction error. When more than 15 bits are used, the two dimensional subspace outperforms the one dimensional subspace due to the better capability of channel prediction. The tradeoff between the quality of channel prediction and quantization is well captured by the lower bounds, which exhibit almost the same switching point as that obtained by MC simulation. Thus, the adaptive subspace dimension switching algorithm (\ref{tradeoff}), denoted by adpt.SDS, is efficient to find the subspace dimension associated with a higher rate.

Fig.~\ref{rate_doppler} shows the sum rate degradation as the increase of the normalized Doppler frequency with a feedback delay $T_{\rm D}=7$ (0.5 ms) $\forall k,\ell$. \zhinan{Due to this additional feedback delay, we reduce the payload size to $T=30$ OFDM symbols to obtain a good channel prediciton.}
The performance is also compared to the traditional non-predictive strategy \zhinann{(represented as ``quantized CIR'' by the red dashed line)}, which feeds back the channel impulse response (CIR) and assumes the channel is constant over the frame length. The estimate of the impulse response is obtained using the solution presented in Sec.~\ref{RR} and then averaged over all pilot positions. At low Doppler frequency, a lower subspace dimension is selected. For $D=1$, the rates achieved by non-predictive and proposed algorithms are similar. This is due to the first dimensional DPS sequence is almost a constant, therefore incapable to predict the channel. As the Doppler frequency increases, the DPS sequences of dimension $D=2$ outperform when the rate increase due to better channel prediction is higher than the rate decrease due to increased quantization error. It also can be seen that the intersection point of the sum rate lower bound for $D \in \{1,2\}$ is almost the same as the one for the  MC simulation. Therefore, by evaluating the rate loss upper bound, the adpt.SDS algorithm (\ref{tradeoff}) is able to select the subspace dimension with a higher rate. 

\subsection {Numerical Results on Sum Rate}
\begin{figure}[t]
\centering
\includegraphics[width=1\columnwidth]{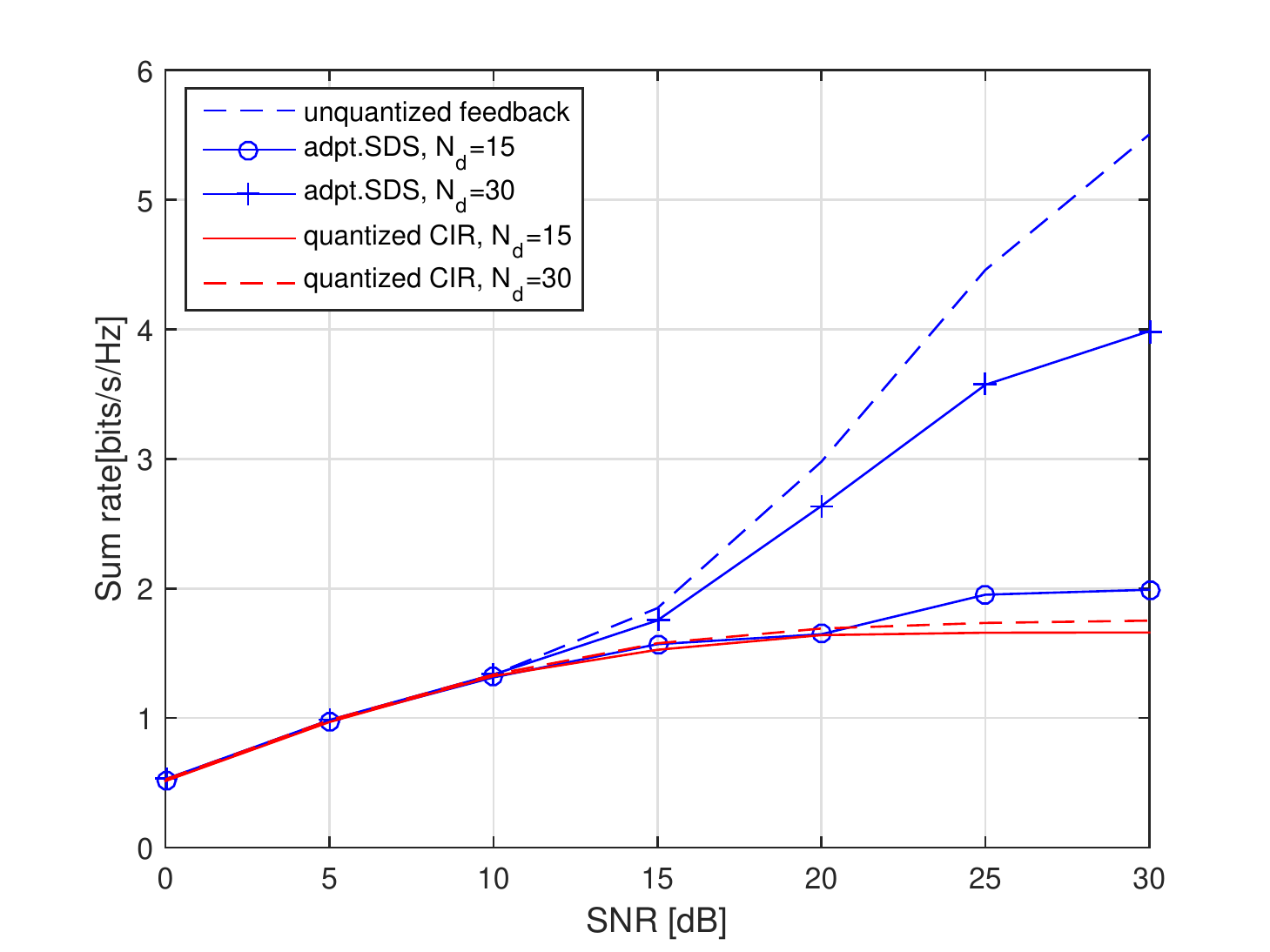}
\caption{Sum rate versus SNR at normalized Doppler frequency $\nu_{\rm D} =0.004$. A 0.5 ms feedback delay $T_{\rm D}=7$ is considered $\forall k,\ell$. The length of the pilot sequence $M=15$. The length of the payload $T=30$. The number of channel taps $S=3$. The number of symbol extensions $N=5$.}
\label{fig3}
\end{figure}
%
Fig.~\ref{fig3} illustrates the sum rate at normalized Doppler frequency $\nu_{\rm D} =0.004$ (24.2km/h) with feedback delay $T_{\rm D}=7$ (0.5 ms) $\forall k,\ell$. The prediction algorithm with adapt.SDS has a subspace dimension $D=1$ at low SNRs, which results in a similar performance to ``quantized CIR''. For $N_{\rm d}=30$, the optimal subspace dimension $D$ switches to 2 at $\SNR=15$~dB. For $N_{\rm d}=15$, the switch takes place later at $\SNR=20$~dB.  As a result, better channel prediction is achieved at higher SNR, especially for a large number of feedback bits. The adaptive subspace dimension switching algorithm is able to efficiently find the dimension associated with a higher rate, which guarantees the superiority of the proposed feedback scheme over the non-predictive strategy.

\section {\zhinan{Conclusion}} \label{Conclusion}
We proposed a novel limited feedback algorithm for SISO interference alignment. The feedback algorithm enables reduced-rank channel prediction, which reduces the channel estimation error due to user mobility and feedback delay. We derived an upper bound of the rate loss due to channel prediction and quantization error, which was used to facilitate an adaptive subspace dimension switching algorithm. The algorithm is efficient to choose the subspace dimension associated with a higher rate by tradeoff between prediction error and quantization error. We characterized the scaling of the required number of bits in order to decouple the rate loss due to channel quantization from the transmit power. Simulation results showed that a higher subspace dimension is preferred for high SNR regime with an adequate number of feedback bits. By adaptively choosing the dimension, a rate gain over the non-predictive strategy can be obtained. With moderate velocities, a rate gain of $60\%$ can be achieved at $20$dB SNR.

\zhinan{ 
}

\begin{appendices} 

\section{Proof of (\ref{expeeta})} \label{app0}
The delay domain subspace vector $\tilde{\bfGamma}_{k,k}$ can be written as 
\begin{align}
&\mathbb{E}\|\tilde{\bfGamma}_{k,k}\|^2\nonumber\\
&=\sum_{s=1}^{S}\mathbb{E}\| \bflambda_{k,k}^{-\frac{1}{2}} \tilde{\bfgamma}_{k,k}^s\|^2\\
&=\sum_{s=1}^{S}\mathbb{E}\[ {\tr}\left({\bflambda_{k,k}^{-\frac{1}{2}} \tilde{\bfgamma}_{k,k}^s \tilde{\bfgamma}_{k,k}^{s \rmH} \bflambda_{k,k}^{-\frac{1}{2}}}\right)\] \\
&=  {\tr}\left({\bflambda_{k,k}^{-\frac{1}{2}}\sum_{s=1}^{S}\mathbb{E}\{ \tilde{\bfgamma}_{k,k}^s \tilde{\bfgamma}_{k,k}^{s \rmH}\} \bflambda_{k,k}^{-\frac{1}{2}}}\right)\\
&=DS \label{bgphimean}
\end{align}
where 
\begin{align}
\mathbb{E}\{ \tilde{\bfgamma}_{k,k}^s \tilde{\bfgamma}_{k,k}^{s \rmH}\} = \bfG^{-1} {\bfU^{(\mcP)}}^{\rmH}  \left( p_{k,k}^{s}\bfR_{\bfh_{k,k}}^{(\mcP)}+\frac{1}{P}\bfI_{\frac{M}{K}}  \right) {\bfU^{(\mcP)}} \bfG^{-1}
\end{align}
and (\ref{bgphimean}) is obtained using the fact $\sum_{s=1}^{S} p_{k,\ell}^{s}=N$.

\section{Proof of Theorem \ref{themdeq2}} \label{app2}
The mean rate loss can be decomposed into the following two terms
\begin{figure*}[!t]
\begin{align}
\Delta R_{\rm ub}<&  \frac{1}{NT} \sum_{k}\sum_{m \in \mcT} {d_k}  {\rm{log}}_{2} \left(1+ NP \left(K-\frac{1}{d_k}\right) {\rm MSE}\left[m,D\right] \right)+\nonumber\\ 
&\frac{1}{NT} \sum_{k}\sum_{m \in \mcT} {d_k}  {\rm{log}}_{2} \left(1+ NPDS \left(K-\frac{1}{d_k}\right) \rho[m] Q(N_{\rmd}) \right) \label{rlossIQ}
\end{align}
\hrulefill
\vspace*{4pt}
\end{figure*}
in (\ref{rlossIQ}) due to $\log(1+A+B)<\log(1+A)+\log(1+B)$ if $A,B>0$. The first term and second term of (\ref{rlossIQ}) are caused by estimation (prediction) error and quantization error, respectively. If the number of feedback bits per channel is $N_{\rm d}= (DS-1) \log_{2} P$, the interference power due to quantization error $\hat{J} _{k,k}^{i,j}[m]$ and $\hat{J} _{k,\ell}^{i,j}[m]$ can be upper bounded by a finite value independent of $P$. Accordingly, the rate loss due to quantization error is also upper bounded.
\end{appendices}
    
\bibliographystyle{IEEEtran}
\begin{footnotesize}
\bibliography{IEEEabrv,library}
\end{footnotesize}

\begin{IEEEbiography}[{\includegraphics[width=1in,height=1.25in,clip,keepaspectratio]{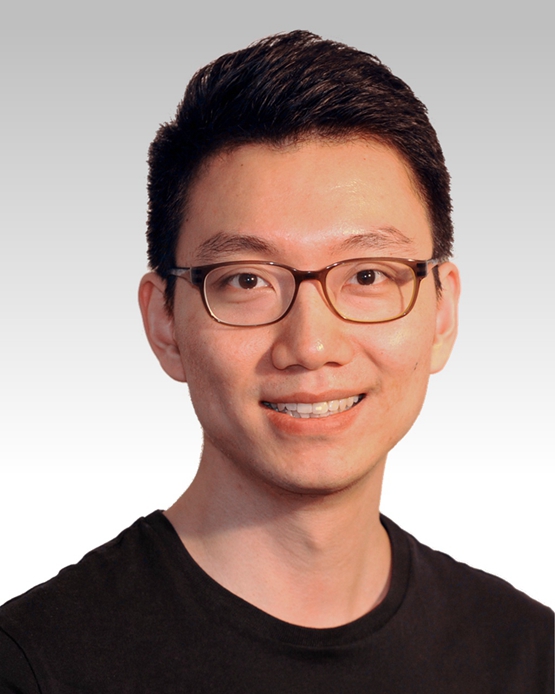}}]{Zhinan Xu}
received his M.Sc. degree in wireless communications from Lund university, Sweden in 2011 and Ph.D degree in telecommunications from Vienna
University of Technology, Austria in 2016. From 2008 to 2009, he worked as mobile network engineer with Huawei Technologies, Shenzhen, China. From 2011 to 2015, he was with the Telecommunications Research Center Vienna (FTW) working as a researcher in "Signal and Information Processing" department. Since 2015 he has been with AIT Austrian Institute of Technology as a junior scientist in the research group for ultra-reliable wireless machine-to-machine communications.
His research interests include interference management, cooperative communication systems, vehicular communications and channel modeling.
\end{IEEEbiography}
\begin{IEEEbiography}[{\includegraphics[width=1in,height=1.25in,clip,keepaspectratio]{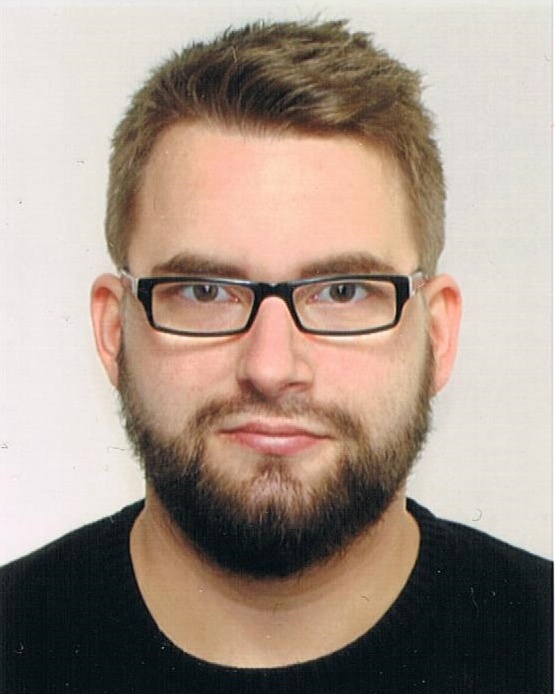}}]{Markus Hofer}
	received the Dipl.-Ing. degree (with distinction) in telecommunications from the Vienna University of Technology in 2013. Since 2013 he is working towards his Ph.D in telecommunications. From 2013 to 2015 he was with the Telecommunications Research Center Vienna (FTW) working as a researcher in "Signal and Information Processing" department. Since 2015 he is with AIT Austrian Institute of Technology as a junior scientist in the research group for ultra-reliable wireless machine-to-machine communications. 
	His research interests include low-latency wireless communications, vehicular channel measurements, modeling and emulation, time-variant channel estimation, cooperative communication systems and interference management.
\end{IEEEbiography}
\begin{IEEEbiography}[{\includegraphics[width=1in,height=1.25in,clip,keepaspectratio]{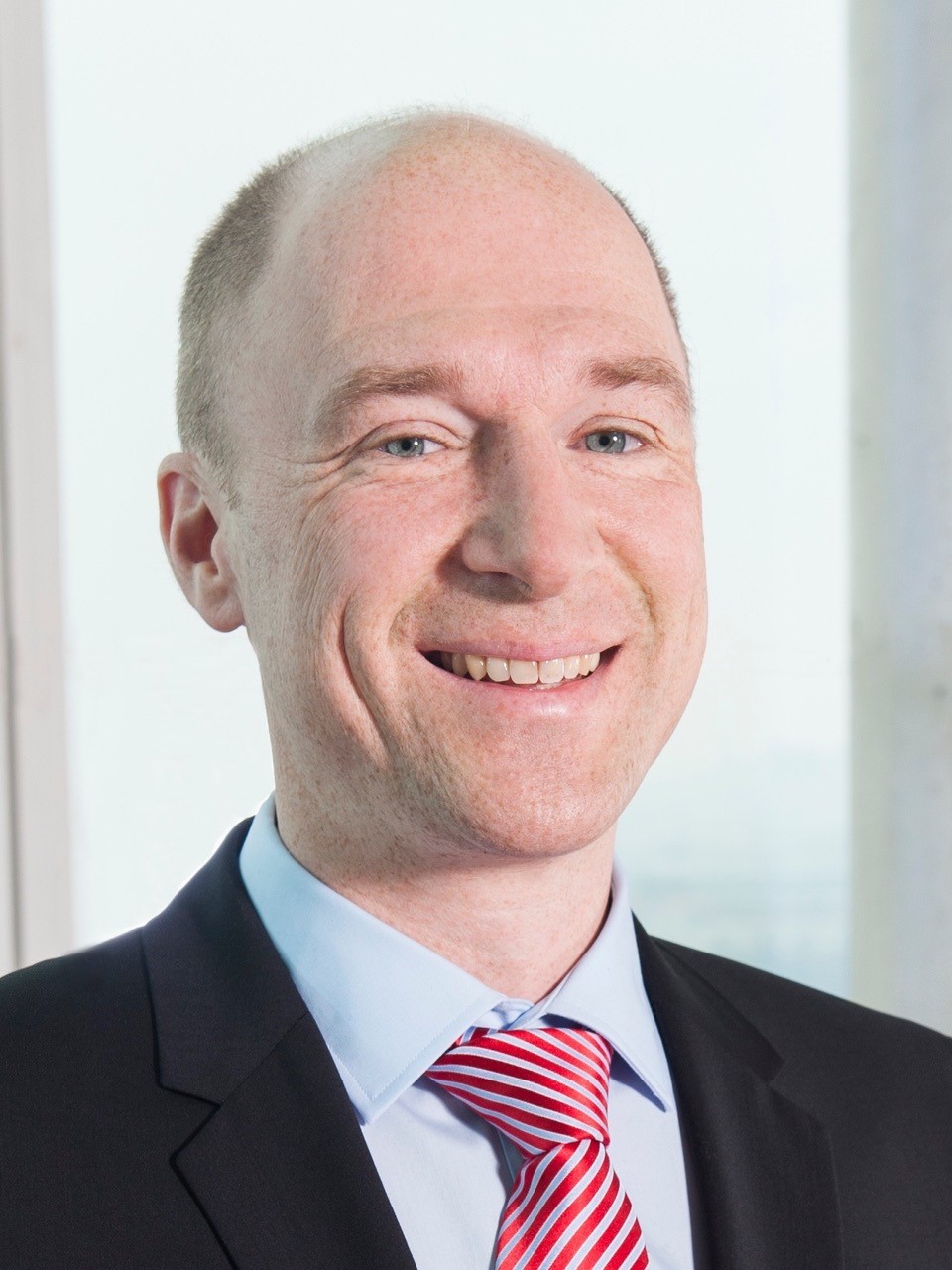}}]{Thomas Zemen}
(S'03--M'05--SM'10) received the Dipl.-Ing. degree (with distinction) in electrical engineering in 1998, the doctoral degree (with distinction) in 2004 and the Venia Docendi (Habilitation) for "Mobile Communications" in 2013, all from Vienna University of Technology.
From 1998 to 2003 he worked as Hardware Engineer and Project Manager for the Radio Communication Devices Department, Siemens Austria. From 2003 to 2015 Thomas Zemen was with FTW Telecommunications Research Center Vienna and Head of the "Signal and Information Processing" department since 2008. Since 2014 Thomas Zemen has been Senior Scientist at AIT Austrian Institute of Technology leading the research group for ultra-reliable wireless machine-to-machine communications. He is the author or coauthor of four books chapters, 30 journal papers and more than 80 conference communications. His research interests focus on reliable, low-latency wireless communications for highly autonomous vehicles; sensor and actuator networks; vehicular channel measurements and modeling; time-variant channel estimation; cooperative communication systems and interference management.
Dr. Zemen is an External Lecturer with the Vienna University of Technology and serves as Editor for the IEEE Transactions on Wireless Communications.
\end{IEEEbiography}

\end{document}